\documentclass[11pt]{article}
\usepackage[latin9]{inputenc}
\usepackage{verbatim}
\usepackage{hyperref}
\usepackage{amsmath}
\usepackage{amsthm}
\usepackage{amssymb}

\makeatletter

\usepackage{amsthm}
\usepackage{tikz}
\usepackage{bbm}
\usepackage{a4wide}
\usepackage{cleveref}

\usepackage{times}

\newtheorem{theorem}{Theorem}
\newtheorem{lemma}[theorem]{Lemma}
\newtheorem{definition}[theorem]{Definition}

\newcommand{\LB}{\mathrm{LB}}
\newcommand{\OPT}{\mathrm{OPT}}
\newcommand{\cost}{\mathrm{cost}}
\newcommand{\len}{\mathrm{len}}
\newcommand{\leftjobs}{\overleftarrow{J}}
\newcommand{\rightjobs}{\overrightarrow{J}}
\newcommand{\rightb}{\overrightarrow{b}}
\newcommand{\hinten}{\mathrm{end}}
\newcommand{\vorne}{\mathrm{beg}}

\usepackage[colorinlistoftodos,shadow,textsize=tiny,textwidth=2cm,color=green!50!gray]{todonotes} 
\makeatother

\newcommand{\email}[1]{\href{mailto:#1}{#1}}

\begin{document}
\global\long\def\N{\mathbb{N}}%
\global\long\def\I{\mathcal{I}}%

\title{Simpler constant factor approximation algorithms for weighted flow time -- now for any $p$-norm\footnote{We thank Schloss Dagstuhl for hosting the Seminar 23061 on Scheduling in February 2023 where we had fruitful discussions on this topic.}}
\date{}
\author{Alexander Armbruster\footnote{TU of Munich, \email{alexander.armbruster@tum.de}} \and Lars Rohwedder~\footnote{Maastricht University, \email{l.rohwedder@maastrichtuniversity.nl}, supported by Dutch Research Council
(NWO) project ``The Twilight Zone of Efficiency: Optimality of Quasi-Polynomial Time Algorithms'' [grant number
OCEN.W.21.268]} \and Andreas Wiese\footnote{TU of Munich, \email{andreas.wiese@tum.de}}}
\maketitle

\begin{abstract}
A prominent problem in scheduling theory is the weighted flow time
problem on one machine. We are given a machine and a set of jobs,
each of them characterized by a processing time, a release time,
and a weight. The goal is to find a (possibly preemptive) schedule
for the jobs in order to minimize the sum of the weighted flow times,
where the flow time of a job is the time between its release time
and its completion time. It had been a longstanding important open
question to find a polynomial time $O(1)$-approximation algorithm
for the problem. In a break-through result, Batra, Garg, and Kumar
(FOCS 2018) presented such an algorithm with pseudopolynomial running
time. Its running time was improved to polynomial time by Feige, Kulkarni,
and Li (SODA 2019). The approximation ratios of these algorithms are
relatively large, but they were improved to $2+\epsilon$ by Rohwedder
and Wiese (STOC 2022) and subsequently to $1+\epsilon$ by Armbruster,
Rohwedder, and Wiese (STOC 2023).

All these algorithms are quite complicated and involve for example
a reduction to (geometric) covering problems, dynamic programs to
solve those, and LP-rounding methods to reduce the running time to
a polynomial in the input size. In this paper, we present a much
simpler $(6+\epsilon)$-approximation algorithm for the problem that does
not use any of these reductions, but which works on the input jobs
directly. It even generalizes directly to an $O(1)$-approximation
algorithm for minimizing the $p$-norm of the jobs' flow times, for
any $0 < p < \infty$ (the original problem setting corresponds to $p=1$). Prior to our work, for $p>1$ only a pseudopolynomial time $O(1)$-approximation algorithm was known for this variant, and no algorithm for $p<1$.

For the same objective function, we present a very simple QPTAS for
the setting of constantly many unrelated machines for
$0 < p < \infty$ (and assuming quasi-polynomially
bounded input data). It works in the cases with and without the possibility
to migrate a job to a different machine. This is the first QPTAS for
the problem if migrations are allowed, and it is arguably simpler
than the known QPTAS for minimizing the weighted sum of the jobs'
flow times without migration.
\end{abstract}

\thispagestyle{empty}
\newpage
\setcounter{page}{1}

\section{Introduction}

Weighted flow time is an important problem in scheduling that has
been studied extensively in the literature, e.g.,~\cite{DBLP:journals/siamcomp/BansalP14,Batra0K18,feige2019polynomial,RohwedderW21,armbruster2023ptas}.
We are given one or multiple machines and a set of jobs $J$. Each
job $j\in J$ is characterized by a processing time $p_{j}\in\N$,
a release time $r_{j}\in\N_{0}$, and a weight $w_{j}\in\N$. The
goal is to compute a schedule for the given machines that respects
the release times. For each job $j\in J$, we denote by $C_{j}$ the
completion time of $j$ in a computed schedule, and we define its
\emph{flow time} by $F_{j}:=C_{j}-r_{j}$. Hence, the flow time of
a job is the time that it stays unfinished in the system. The objective
is to minimize the total weighted flow time $\sum_{j\in J}w_{j}F_{j}$
of the jobs. Note that this also minimizes the average weighted flow
time of the jobs. Weighted flow time is a natural objective function
since it is desirable to finish each job as quickly as possible after
it has been released. Throughout this paper, we will assume that we
can preempt the jobs, i.e., we can start working on a job $j\in J$
on some machine and then interrupt and resume it later.

Despite a lot of research on the problem, it had been open
for a long time to find a polynomial time $O(1)$-approximation
algorithm for the setting of a single machine. For example,
in an influential survey by Schuurman and Woeginger~\cite{schuurman-woeginger},
it was considered as one of the ten most important open problems in machine
scheduling.
In a breakthrough result,
Batra, Garg, and Kumar~\cite{Batra0K18} presented an $O(1)$-approximation
algorithm with pseudopolynomial running time for the problem. Subsequently,
Feige, Kulkarni, and Li gave a black-box reduction to transform such
an algorithm into a polynomial time algorithm (while losing only a
factor of $1+\varepsilon$)~\cite{feige2019polynomial}. The algorithm in~\cite{Batra0K18} has a very large
approximation ratio; however, this ratio was later improved to $2+\varepsilon$
by Rohwedder and Wiese~\cite{RohwedderW21} and finally to $1+\varepsilon$ by
Armbruster, Rohwedder, and Wiese~\cite{armbruster2023ptas}.

All these polynomial time $O(1)$-approximation algorithms first reduce
the problem to a (geometric) covering problem. Then, this covering
problem is solved via a pseudopolynomial time dynamic program (DP).
To obtain a polynomial time algorithm, the mentioned black-box reduction
in~\cite{feige2019polynomial} partitions a given instance into a linear number of
instances with polynomially bounded input data. Then, it combines
solutions for those instances to a global solution using LP-rounding
techniques.

Due to these transformations, the resulting algorithms are quite complicated.
Also, it is not very transparent what their steps concretely mean
for the given input jobs. Due to the reductions to the different covering
problems and their complicated DPs, this scheduling viewpoint is lost.
In particular, it would be desirable to have an algorithm that works
on the input jobs directly. Then, one could understand better how
this algorithm makes decisions for the given jobs and how properties
of the problem are used.

Since we have a PTAS for weighted flow time on one machine, a natural
next question is how we can solve the problem on two machines, or
maybe on a constant number of machines. For these cases, no polynomial
time $O(1)$-approximation algorithms are known. There is a quasi-polynomial
time $(1+\varepsilon)$-approximation algorithm if the jobs are not
allowed to be migrated~\cite{bansal2005minimizing}, i.e., if for each job we need to select
a machine on which we process it completely (possibly preemptively).
However, no such algorithm is known in the setting with migration.
Another natural generalization is to minimize the $p$-norms of the
weighted jobs' flow times, i.e., to minimize $(\sum_{j}w_{j}(F_{j})^{p})^{1/p}$
where $0<p<\infty$ (i.e., the case $p=1$ corresponds to minimizing
the sum of the weighted flow times). The mentioned pseudopolynomial
time $O(1)$-approximation algorithm~\cite{Batra0K18} extends to the case where
$p\ge1$; however, it is not clear whether this is also the case for
the black-box reduction in~\cite{feige2019polynomial}, and how to solve the case where
$p<1$.

\subsection{Our contribution}

In this paper, we present a $(6+\varepsilon)$-approximation algorithm
for weighted flow time on a single machine. Our algorithm is simpler
than the previous $O(1)$-approximation algorithms for the problem.
In particular, we do not use any of the reductions mentioned above,
but we compute a solution for the given jobs directly. This makes
it arguably easier to understand what the steps of our algorithm mean
for these jobs. Still, our approximation ratio is much smaller than
ratio of first $O(1)$-approximation algorithm~\cite{Batra0K18} which was at
least $10,\!000$. Also, our algorithm works directly in polynomial
time, and, therefore, it does not need the black-box reduction from~\cite{feige2019polynomial}.

Our algorithm is a recursion that we embed into a DP in order to bound
its running time by a polynomial. It partitions the overall problem
step by step into smaller subproblems. In each step, we make some
partial decisions about the completion times for some of the jobs.
The corresponding subproblems are carefully designed such that they
have only limited interaction between each other. On a high level,
each subproblem is characterized by a subinterval $[s,t)$ and a set
of jobs $J(s,t)$. We need to decide for each job $j\in J(s,t)$ whether
we want to complete it before time $s$, during $[s,t)$, or after
$t$. In case we decide that $j$ completes before time $s$ or after
time $t$, then in the current subproblem we do not select the final
completion time of $j$ yet; this will be selected in some other subproblem.
One key insight is that if a job $j$ is released long before $s$
and completes during $[s,t)$, then for its flow time it does not
matter much where exactly during $[s,t)$ it completes. Therefore,
for each such job $j$ we allow only two decisions in the current
subproblem: either we decide that it completes until time $s$ or
we allow that its completion time is $t$ or larger.
Intuitively, we guess how much time we want to work until time $s$
on such jobs in $J(s,t)$ (that are released long before $s$). Then,
we use the Lawler-Moore algorithm to select which of these jobs we
want to scheduling during that time. On the other hand, if a job $j\in J(s,t)$
is released shortly before $s$ and completes during $[s,t)$, then
for $j$ it \emph{does }matter when during $[s,t)$ we complete it.
Therefore, we recurse on such jobs in two different subproblems that
correspond to two subintervals of $[s,t)$.

First, we present a version of our algorithm that runs in pseudopolynomial
time. We cannot bound the running time by a polynomial because each
DP-cell has one entry for a point in time $b$ for which we cannot
directly bound the set of possible values by a polynomial. This is
similar to the classical DP for the knapsack problem in which each
cell is characterized by the remaining space in the knapsack (for
which there could be more than a polynomial number of options) and
the next item in the list. We transform our pseudopolynomial time
DP into a polynomial time algorithm with the same method that is used
in the mentioned DP for knapsack. We discretize and round the costs
of our solutions and we characterize each DP-cell by a bound for this cost
instead of $b$. Then, given such a cost bound, our DP computes a
solution with the best possible value for $b$. In this way, we circumvent
the black-box reduction in~\cite{feige2019polynomial} and directly obtain a polynomial
time algorithm. Our algorithm extends directly to the setting of minimizing
the $p$-norm of the weighted job's flow times, i.e., to minimize~$(\sum_{j}w_{j}(F_{j})^{p})^{1/p}$.
Our approximation ratio is a constant for any fixed $p>0$, it is bounded by $6+\epsilon$
for any $p\ge 1$, and it tends to $2+\epsilon$ for $p\rightarrow \infty$.
In particular, this yields the first \emph{polynomial} time $O(1)$-approximation
algorithm for this setting (recall that it is not clear whether the
black-box reduction in~\cite{feige2019polynomial} can be extended to it).

Our second result is a quasi-polynomial time $(1+\varepsilon)$-approximation
algorithm for minimizing~$(\sum_{j}w_{j}(F_{j})^{p})^{1/p}$ in the
setting of $O(1)$ machines, assuming that the job's processing times
are quasi-polynomially bounded integers. For this, we allow even that
the machines are unrelated, i.e., that for each combination of a job
$j$ and a machine $i$ we are given a processing time $p_{ij}\in\N\cup\{\infty\}$
that the job $j$ would need if we executed it completely on machine
$i$. We obtain our result in the settings with and without migration,
i.e., with and without the possibility to interrupt a job
on one machine and resume it
(later) on some other machine.

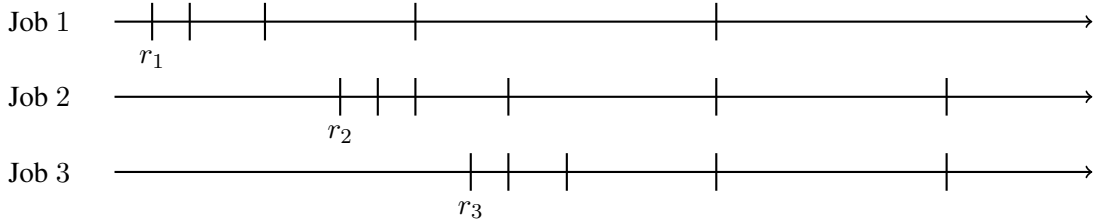
\begin{figure}
\begin{center}
\begin{tikzpicture}
		\node at (-4, 0) {Job $1$};
		\draw[->, thick] (-3,0) -- (10,0);
		
		\draw[-, thick] (-2.5,0.25) -- (-2.5,-0.25);
		\node[] at (-2.5, -0.5) {$r_1$};
		\draw[-, thick] (-2,0.25) -- (-2,-0.25);
		\draw[-, thick] (-1,0.25) -- (-1,-0.25);
		\draw[-, thick] (1,0.25) -- (1,-0.25);
		\draw[-, thick] (5,0.25) -- (5,-0.25);
		
		\node at (-4, -1) {Job $2$};
		\draw[->, thick] (-3,-1) -- (10,-1);
		\draw[-, thick] (0,-0.75) -- (0,-1.25);
		\node[] at (0, -1.5) {$r_2$};
		\draw[-, thick] (0.5,-0.75) -- (0.5,-1.25);
		\draw[-, thick] (1,-0.75) -- (1,-1.25);
		\draw[-, thick] (2.236,-0.75) -- (2.236,-1.25);
		\draw[-, thick] (5,-0.75) -- (5,-1.25);
		\draw[-, thick] (8.062,-0.75) -- (8.062,-1.25);

		\node at (-4, -2) {Job $3$};
		\draw[->, thick] (-3,-2) -- (10,-2);
		\draw[-, thick] (1.736,-1.75) -- (1.736,-2.25);
		\node[] at (1.736, -2.5) {$r_3$};
		\draw[-, thick] (2.236,-1.75) -- (2.236,-2.25);
		\draw[-, thick] (3.013,-1.75) -- (3.013,-2.25);
		\draw[-, thick] (5,-1.75) -- (5,-2.25);
		\draw[-, thick] (8.062,-1.75) -- (8.062,-2.25);
\end{tikzpicture}
\end{center}
\caption{Alignment of deadlines and intervals in QPTAS}
\label{fig:intervals}
\end{figure}

Our algorithm is surprisingly simple. We assume that the jobs are
ordered non-decreasingly by their release times. For each job $j$,
we introduce logarithmically many candidate deadlines. If we
assign a deadline $d_{j}$ to a job $j$, we decide to complete $j$
until time $d_{j}$ and pay for $j$ as if $j$ completed exactly
at time~$d_{j}$. We define these candidate deadlines densely enough
so that we lose at most a factor $1+\varepsilon$ if a job completes
strictly between two of them.
Also, this yields logarithmically many
intervals between any two consecutive deadlines for a job $j$. For
any two jobs $j,j'$, these intervals are aligned (see Figure~\ref{fig:intervals}),
i.e., any two such intervals are either disjoint or one is contained
in the other.

First, our algorithm guesses for how long we want to execute the first
job during each of its intervals. Then we recurse on the next job.
We embed the whole procedure into a DP in which we have a subproblem
for each combination of a (next) job $j$ and the information for
how long each subset of the machines is still available during each
interval for $j$. Given such a subproblem, we enumerate all possibilities
for how long we work on $j$ during each of its intervals and on which
machines. Each such possibility yields a DP-cell for the next job
$j'$. If the intervals for $j'$ are finer than the intervals for
$j$, we additionally enumerate all possibilities how the work during
each interval for $j$ is distributed over the intervals for $j'$.

Our QPTAS is arguably simpler than the known QPTASs for minimizing
the sum of the weighted jobs' flow times on one or $O(1)$ machines~\cite{chekuri2002approximation, bansal2005minimizing}.
Also, we remark that prior to our work, no \mbox{QPTAS} was known for minimizing
$\sum_{j\in J}w_{j}F_{j}$ on $O(1)$ machines in the setting with
migration, not even for~$O(1)$ identical machines.
We remark that in~\cite{bansal2005minimizing} it is explicitly mentioned
that it is not clear how to adapt the used methods to the setting with migration
(and the QPTAS in \cite{chekuri2002approximation} works only on one machine).

\subsection{Other related work}
Prior to the mentioned first constant factor approximation algorithm for weighted flow time on a single machine~\cite{Batra0K18}, a polynomial time $O(\log \log P)$-approximation for the general scheduling problem~(GSP) was presented by Bansal and Pruhs~\cite{DBLP:journals/siamcomp/BansalP14}. Here, $P$ denotes the ratio between the largest and the smallest processing time in a given instance. GSP is a generalization of weighted flow time on a single machine and it can model various different objective functions, including the $p$-norm of the weighted flow times of the jobs.
Prior to the PTAS for minimizing weighted flow time on a single machine~\cite{armbruster2023ptas}, PTASs were known for the special case where $w_j=1/p_j$ for each job $j\in J$~\cite{DBLP:journals/scheduling/BenderMR04, chekuri2002approximation} and for the case that~$P=O(1)$~\cite{chekuri2002approximation}.

Weighted flow time has also been studied in the online setting. There is an online algorithm with a competitive ratio of $O( \min \{\log W, \log P, \log D\})$ due to Azar and Touitou~\cite{azar2018improved}, where $W$ and $D$ are the ratios between the largest and smallest job weights and densities, respectively, where the density of a job $j$ is defined as $w_j/p_j$. This result improves and unifies the previously known $O(\log P)$-competitive algorithm by Chekuri, Khanna, and
Zhu~\cite{chekuri2001algorithms} and the $O(\log W)$-competitive algorithm by Bansal and Dhamdhere~\cite{bansal2007minimizing}. Previously, Bansal and Dhamdhere~\cite{bansal2007minimizing} presented also a semi-online $O(\log nP)$-competitive algorithm. Furthermore, there cannot be an online algorithm with a constant competitive ratio, as shown by Bansal and Chan~\cite{bansal2009weighted}.

In the setting with multiple machines there is the mentioned QPTAS by Bansal~\cite{bansal2005minimizing} which works only in the setting without job migrations (however, preemptions are allowed).
In addition, there is an FPT-$(1+\varepsilon)$-approximation algorithm known where the parameters are the number of machines, $\varepsilon$, and upper bounds on the (integral) jobs processing times and weights~\cite{wiese:LIPIcs:2018:9432}.

\section{Constant approximation algorithm for single machine\label{sec:algorithm} }
In this section, we present a 6-approximation algorithm for minimizing
sum of weighted flow times on a single machine with pseudopolynomial running
time. In Appendix~\ref{sec:polytime},
we will transform it into a $(6+\varepsilon)$-approximation
algorithm with polynomial running time,
and into a polynomial time $O(1)$-approximation algorithm
for any $p$-norm.

In our algorithm, we do not compute a schedule directly; instead,
for each job $j$ we compute a deadline $d_{j}$ such that $j$ needs
to be completed by time $d_{j}$. Given these deadlines, we obtain
a schedule by using the earliest-deadline-first (EDF) algorithm. This
algorithm starts at time $0$ and at each point in time, it works
on a job with the earliest deadline among all jobs that have been
released but not yet finished. Ties are broken in an arbitrary, but
globally consistent way. EDF is optimal in the sense that it finds
a feasible schedule (i.e., a schedule that respects all deadlines)
if such a schedule exists. This is shown in the following lemma
which also gives a characterization when the job deadlines allow
a feasible schedule.

\begin{lemma}[\cite{DBLP:journals/siamcomp/BansalP14}]\label{lem:EDF}
Given a set of jobs $J$ with processing times $p_{j}$, release dates
$r_{j}$ and deadlines $d_{j}$, the following statements are equivalent: 
\begin{itemize}
\item there exists a preemptive schedule in which each job $j$ finishes
before its deadline $d_{j}$,
\item EDF computes such a schedule,
\item for each interval $[s,t]$ with $\min_{j}r_{j}\leq s\leq t\leq\max_{j}d_{j}$
it holds that $\sum_{j\in J:s\leq r_{j}\leq d_{j}\leq t}p_{j}\leq t - s$.
\end{itemize}
\end{lemma}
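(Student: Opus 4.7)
The plan is to prove the equivalence by establishing the cycle $(2)\Rightarrow(1)\Rightarrow(3)\Rightarrow(2)$. The implication $(2)\Rightarrow(1)$ is immediate, since EDF itself produces a preemptive schedule. For $(1)\Rightarrow(3)$, fix any interval $[s,t]$ with $\min_j r_j\le s\le t\le\max_j d_j$ and any feasible schedule. Every job $j$ with $s\le r_j$ and $d_j\le t$ can only be processed in the window $[r_j,d_j]\subseteq[s,t]$, so the total processing volume of all such jobs is at most the length $t-s$ of the available machine time in $[s,t]$. This gives condition~$(3)$ directly.

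The nontrivial direction is $(3)\Rightarrow(2)$, which I would prove by contrapositive. Assume EDF fails on some job $j^{*}$, i.e., the EDF completion time satisfies $C_{j^{*}}>d_{j^{*}}$. Set $t:=C_{j^{*}}$ and let $s$ be the latest time $\le t$ such that, just before time $s$, EDF is either idle or processing a job whose deadline is strictly larger than $d_{j^{*}}$; if no such time exists, set $s=0$. By maximality, during the entire half-open interval $(s,t]$ the machine is busy and processes only jobs with deadline $\le d_{j^{*}}$, so the total work done in $(s,t]$ equals $t-s$.

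The key claim is that every job $j$ processed in $(s,t]$ satisfies $r_j\ge s$. Indeed, if some such job had $r_j<s$, then $j$ would have been available just before time $s$, and since $d_j\le d_{j^{*}}$, EDF would have chosen $j$ over being idle or over the lower-priority job it was actually processing, contradicting the definition of $s$. Hence every job processed in $(s,t]$ has $s\le r_j$ and $d_j\le d_{j^{*}}$. Summing processing times over these jobs yields
\[
\sum_{j\in J:\,s\le r_j\le d_j\le d_{j^{*}}} p_j \;\ge\; t-s \;>\; d_{j^{*}}-s,
\]
which violates condition $(3)$ on the interval $[s,d_{j^{*}}]$. This completes the cycle.

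The main obstacle is the careful choice of $s$ in the last implication, together with the exchange argument showing that no job released before $s$ can be processed in $(s,t]$; this is what forces the violating interval to be contained in $[s,d_{j^{*}}]$ rather than merely $[s,t]$, and it is essential for the counting step to produce a strict violation of~$(3)$.
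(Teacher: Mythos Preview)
The paper does not give its own proof of this lemma; it is stated with a citation to Bansal--Pruhs and used as a black box. Your cycle $(2)\Rightarrow(1)\Rightarrow(3)\Rightarrow(2)$ is the standard argument for the optimality of EDF and is correct. A couple of small points that you leave implicit are worth making explicit for completeness: in the claim that any job $j$ processed in $(s,t]$ has $r_j\ge s$, you should note that such a $j$ is not yet completed at time $s$ (since it is processed later), which is what makes it ``available'' just before $s$; and you should observe that $s\le r_{j^*}\le d_{j^*}$, because once $j^*$ is released EDF can only process jobs with deadline at most $d_{j^*}$, so the interval $[s,d_{j^*}]$ is indeed of the form required in condition~(3). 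With these remarks the argument is complete.
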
 

\subsection{Dynamic programming table}
Our algorithm is a dynamic program. The intuition for our DP-cells
is the following: each DP-cell corresponds to an interval $[s,t)$ and a starting time $b$. It concerns only some jobs $J(s, t)\subseteq J$ that we will specify later.
For each job $j\in J(s, t)$ we want to
\begin{itemize}
\item determine that its deadline $d_{j}$ is at most $s$, or
\item determine that its deadline $d_{j}$ is at least $t$, or
\item specify a deadline $d_{j}$ such that $d_{j}\in(s,t)$.
\end{itemize}
We have the restriction that before time $b$ we are
not allowed to work on jobs in $J(s, t)$. Intuitively, this time is reserved
for jobs in $J\setminus J(s, t)$. Given such a DP-cell, we will identify
a set of jobs $\overleftarrow{J}\subseteq J(s, t)$ which are released long before $s$. In particular, for such jobs it makes only a small difference
in the cost whether they finish at time $s$ or at time $t$. We guess
the total processing time of jobs in $\leftjobs$ that complete before
$s$ in the optimal solution to the DP-cell. Given this as a constraint,
we use the Lawlers-Moore algorithm (details below) 
to decide which jobs in $\leftjobs$ finish
before $s$ and which finish after $t$, minimizing the resulting
cost. Then, for defining the deadlines of the jobs in $J(s, t)\setminus \leftjobs$,
we make a recursive call to two DP-cells corresponding to the (smaller)
intervals $[s,(s+t)/2)$ and $[(s+t)/2,t)$.
Finally, we carefully merge the recursively computed schedules together. 

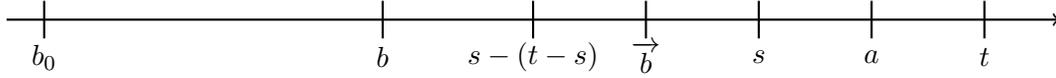
\begin{figure}
\begin{centering}
\begin{tikzpicture}
		\draw[->, thick] (-4,0) -- (10,0);
		\draw[-, thick] (1,0.25) -- (1,-0.25);
		\node[] at (1, -0.5) {$b$};
		
		\draw[-, thick] (-3.5,0.25) -- (-3.5,-0.25);
		\node[] at (-3.5, -0.5) {$b_0$};
		
		\draw[-, thick] (3,0.25) -- (3,-0.25);
		\node[] at (3, -0.5) {$s-(t-s)$};
		\draw[-, thick] (4.5,0.25) -- (4.5,-0.25);
		\node[] at (4.5, -0.5) {$\rightb$};

		\draw[-, thick] (6,0.25) -- (6,-0.25);
		\node[] at (6, -0.5) {$s$};

		\draw[-, thick] (7.5,0.25) -- (7.5,-0.25);
		\node[] at (7.5, -0.5) {$a$};
		
		\draw[-, thick] (9,0.25) -- (9,-0.25);
		\node[] at (9, -0.5) {$t$};
	\end{tikzpicture}
\par\end{centering}
\caption{\label{fig:DP-cells}An example for the values $s,t,b,$ and $b_{0}$
in the definition of a DP-cell $(s,t,b)$ and the value $\protect\rightb$
and $a$ that arise in its computation.}
\end{figure}

Formally, let $T$ be the smallest power of $2$ that is larger
than $\max_{j\in J} r_j + \sum_{j\in J} p_j$.
In particular, if a schedule does not
have unnecessary idle time, then it will complete all jobs by time $T$.
Thus, we may restrict our attention to the time interval $[0, T)$.
Each DP-cell is defined by a tuple $(s, t, b)\in \mathbb{Z}^{3}$
with $0\leq b\leq s<t\leq T$ (see Figure~\ref{fig:DP-cells}).
We define a set of intervals $\I$ resembling
the a binary tree structure and require for each DP-cell that
$[s,t)\in\I$. We define that $[0,T)\in\I$ which intuitively
forms the \emph{root} of the tree. Recursively,
we define that each interval $I=[s,t)\in\mathcal{I}$ with $t-s>1$
has a \emph{left child} $[s,(s+t)/2)\in\mathcal{I}$
and a \emph{right child} $[(s+t)/2,t)\in\mathcal{I}$.
Each interval $I=[s,t)\in\mathcal{I}$ with $t-s=1$ is a \emph{leaf}
and does not have any children. Given $[s,t)\in\I$, we define an
\emph{earliest starting time} $b_{0}$ as 
\begin{equation*}
b_{0}:=
\begin{cases}
\max\{0,s-2(t-s)\} &\text{if $(s,t)$ is a left child in $\mathcal{I}$,
i.e. $[s,t+(t-s))\in\mathcal{I}$ and}\\
\max\{0,s-3(t-s)\} &\text{if $(s,t)$ is a right child in $\mathcal{I}$,
i.e. $[s-(t-s),t)\in\mathcal{I}$}.
\end{cases}
\end{equation*}
For the root interval $[0,T)$ we define $b_{0}:=0$. We require for
each DP-cell that $b\geq b_{0}$, i.e., otherwise we do not introduce
a DP-cell for a tuple $(s,t,b)$.

The subproblem corresponding to a DP-cell $(s,t,b)$ is the following:
Let $J(s, t)$ be the set of all jobs $j\in J$ with $b_{0}\leq r_{j} < t$.
A solution to this subproblem consists of a deadline $d_{j}\in[s,t)\cup\{\infty\}$
for each job $j\in J(s, t)$, such that there exists a schedule processing
all jobs in $J(s, t)$ after $b$ and before their respective deadlines.
In order to define the cost of such a solution, for each job $j\in J(s, t)$
we define a function $\cost_{j}$ as 
\[
\cost_{j}(d)=\begin{cases}
0 & \text{ if }d=s\\
w_{j}\cdot(\min\{d,t\}-r_{j}) & \text{ if }d>s
\end{cases}
\]
 and define the cost of the computed solution as $\sum_{j\in J(s, t)}\cost_{j}(d_{j})$. 

Intuitively, if for some job $j\in J(s, t)$ we define $d_{j}=s$, then
in this DP-cell we determine only that $j$ completes at some point before
$s$ (possibly much earlier than $s$) and the precise deadline of
$j$ is determined by some other DP-cell. If $d_{j}\in (s,t)$ then in
this DP-cell we determine the precise deadline $d_{j}$ of $j$ and pay
$w_{j}(d_{j}-r_{j})$, i.e., the weighted flow time of $j$. If $d_{j}=\infty$,
then in this DP-cell we determine only that $d_{j}\ge t$, the precise deadline
of $j$ is determined in some other DP-cell, and in this DP-cell we pay
only for $w_{j}\cdot(t-r_{j})$, i.e., the weighted flow time of $j$
up to time $t$.

The initially given problem instance corresponds to the subproblem
for the DP-cell $(0,T,0)$ (note that then $J(s, t)=J$). We will show
that all jobs finish by time $T$ if we run EDF with the computed deadlines
(see in Lemma~\ref{lem:makespan}). Hence, also a job $j$ with $d_{j}=\infty$
completes by time $T$, and thus we could equivalently define $d_{j}=T$
instead for all such jobs.

\subsection{Computing values of DP-cells}
Before we describe how we compute a solution to each DP-cell, we review
the Lawler-Moore-algorithm, which we will use as a subroutine for this
task. We use it to solve the following subproblem: Given is a set
of jobs $\widehat{J}$ where each job $j\in\widehat{J}$ is characterized
by a processing times $\widehat{p}_j\in\N$, a release time $\widehat{r}_j\in\N_{0}$,
and a penalty cost $\widehat{c}_j\in\N_{0}$. In addition, there is
a common deadline $\widehat{d}$ for all jobs. The goal is to compute
a possibly preemptive schedule for the jobs in $\widehat{J}$. In such
a schedule, let $S\subseteq\widehat{J}$ be all jobs
that complete strictly after than $\widehat{d}$. Then its cost 
is $\sum_{j\in S}\widehat{c}_j$.
\begin{lemma}[Lawler-Moore algorithm]
\label{lem:Lawler} Given a set of $\widehat n$ jobs $\widehat{J}$ with release dates
$\widehat{r}_j$, processing time $\widehat{p}_j$, cost $\widehat{c}_j$
and a common deadline $\widehat{d}$. In time $O(\widehat n\sum_{j}\widehat{p}_j)$ we can
compute a schedule with minimum cost for the above problem. 
\end{lemma}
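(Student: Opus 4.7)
The plan is to reduce the problem to the classical single-machine problem of minimizing the weighted number of late jobs with a common release date and individual deadlines, and then to solve the latter by a knapsack-style dynamic program. First I observe that the objective depends only on which subset $E \subseteq \widehat J$ of jobs is ``on time'' (completes by $\widehat d$): feasibility of the on-time set depends solely on the processing times and release dates of its members, while the late jobs can be placed arbitrarily afterwards without affecting the cost. Hence the task reduces to picking a feasible $E$ minimizing $\sum_{j \notin E} \widehat c_j$.

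Next I would reverse the time axis around $\widehat d$. Under the substitution $s := \widehat d - t$, each processing interval of a job $j$ in an on-time schedule satisfies $0 \leq s \leq \widehat d - \widehat r_j$, so the reversed instance has all jobs released at $0$ and individual deadlines $\widehat d'_j := \widehat d - \widehat r_j$ (any job with $\widehat r_j > \widehat d$ is trivially late and is removed up front). Lemma~\ref{lem:EDF} applied to the reversed instance then characterizes feasibility of $E$ cleanly: after sorting $E$ by increasing $\widehat d'_j$, the processing times accumulated along any prefix $e_1, \ldots, e_i$ must satisfy $\sum_{k \leq i} \widehat p_{e_k} \leq \widehat d'_{e_i}$.

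Given this characterization, the DP is standard. Sort all jobs so that $\widehat d'_1 \leq \cdots \leq \widehat d'_{\widehat n}$, and let $f(i, P)$ be the minimum cost of jobs rejected from $\{1, \dots, i\}$ such that the accepted jobs have total processing time exactly $P$ and form a feasible prefix in the reversed instance. The recurrence is
\[
f(i, P) = \min\bigl\{\, f(i-1, P) + \widehat c_i,\ \ f(i-1, P - \widehat p_i) \text{ if } \widehat p_i \leq P \leq \widehat d'_i \,\bigr\},
\]
with base case $f(0,0) = 0$ and $f(0, P) = \infty$ for $P > 0$. The inequality $P \leq \widehat d'_i$ is exactly the new prefix feasibility constraint introduced when accepting job $i$; earlier prefixes are handled inductively. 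The optimum value is $\min_P f(\widehat n, P)$, and an optimal schedule is recovered by standard DP backtracking to reconstruct $E$ and then running EDF on the original instance.

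The only nontrivial step is the reduction itself: one has to verify that time reversal preserves preemptive feasibility and that Lemma~\ref{lem:EDF}, specialized to a common release date, yields precisely the prefix condition exploited by the DP invariant. The DP is otherwise routine: $P$ ranges over $\{0, 1, \dots, \sum_j \widehat p_j\}$, giving $O(\widehat n \sum_j \widehat p_j)$ cells with constant work per cell, which matches the claimed running time.
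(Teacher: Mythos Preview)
Your proof is correct and follows essentially the same approach as the paper: both reduce to the classical Lawler--Moore setting (common release date, individual deadlines) via the time reversal $\widehat d'_j := \widehat d - \widehat r_j$, and the paper then simply cites~\cite{lawler1969functional} for the pseudopolynomial DP that you have spelled out explicitly. Your write-up is therefore a more self-contained version of the same argument.
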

We process the DP-cells $(s,t,b)$ in non-decreasing order of the lengths
of the intervals $t-s$. 
Let~$\leftjobs$ denote the set
of jobs $j\in J(s,t)$ with $r_{j}\le s-(t-s)$ and define $\rightjobs := J(s,t) \setminus \leftjobs$.

We consider first $\leftjobs$.
For each job $j\in\leftjobs$, we
have that $\cost_{j}(\infty)=w_{j}(t-r_{j})\le 2\cdot\cost_{j}(t')$ for
any $t'\in (s,t]$. Hence, up to a factor of 2, we are indifferent whether
we assign $j$ a deadline in $(s,t]$ or a deadline of $\infty$.
Therefore, for each job $j\in \leftjobs$, we allow only the deadlines
$d_{j} = s$ and~$d_{j} = \infty$.
Consider the schedule corresponding to the optimal solution to $(s,t,b)$.
We define $\rightb$ as the amount of time during $[0, s]$ where the machine
does \emph{not} process jobs in $\rightjobs$. It is not hard to
see (details in the analysis) that one may assume without loss of generality that
all jobs in $\leftjobs$ with deadline $s$ are only processed before time
$\rightb$ and $\rightjobs$ is only processed after time $\rightb$.
We guess $\rightb$, i.e., we try each value
$\rightb\in \mathbb N_0$ with $\max\{b,s-(t-s)\}\leq \rightb\leq s$.

Using Lemma~\ref{lem:Lawler}, we then compute a minimal cost schedule
to the jobs in $\widehat{J}:=\leftjobs$, where for each job $j\in\widehat{J}$
we define $\widehat{r}_j:=\max\{r_{j}, b\}$, $\widehat{c}_j:=\cost_{j}(\infty)$,
and $\widehat{d}=\rightb$. Consider the resulting schedule for the jobs
$\widehat{J}$. We extract from this schedule the information which
jobs in $\widehat{J}$ finish before $\widehat{d}$ and which jobs finish
after $\widehat{d}$. For each job $j\in\widehat{J}$ finishing at time $\widehat{d}$
or earlier, we define $d_{j}:=s$; for each job $j\in\widehat{J}$ finishing
after time $\widehat{d}$, we define $d_{j}:=\infty$. 

It remains to define deadlines for $\rightjobs$, which
contains all jobs released in $(s-(t-s),t)$.
If $t - s = 1$ then these are exactly the jobs released at time $s$
and therefore none of them can complete before $t$. In this case,
we set $d_j = \infty$ for each job $j\in \rightjobs$.
Now assume that $t - s > 1$.
To define the deadlines for the jobs in $\rightjobs$,
we use the precomputed deadlines from two other DP-cells, which correspond to child intervals
of $[s,t)$ in $\I$.

Let $a=(s+t)/2$ and for the solutions in DP-cells $(s,a,\rightb)$ and $(a,t,\rightb)$ denote the sets of deadlines by $\{ d_{j}^{(1)}\} _{j\in J(s,a)}$ and
$\{ d_{j}^{(2)}\} _{j\in J(a,t)}$, respectively. Notice that $J(a,t) = \rightjobs$.
If for some job $j$ the right subproblem assigned it a deadline of $a$, i.e., $d_{j}^{(2)}=a$,
then also $j\in J(s,a)$ and,
intuitively, job $j$ should finish by time $a$ the latest. In that case, we define $d_j$ according to the left subproblem, i.e.,
$d_j:=d_{j}^{(1)}$. Otherwise, define $d_j$ according to the right subproblem, i.e., $d_j:=d_{j}^{(2)}$. Formally,
\begin{equation*}
d_{j}:=\begin{cases}
d_{j}^{(2)} & \text{ if }d_{j}^{(2)}>a \\
\min\{d_{j}^{(1)}, a \} & \text{ otherwise.}
\end{cases}
\end{equation*}
Thus, we compute a deadline for every job in $J(s, t)$, assuming a fixed
guess for $\rightb$. We store in the our
DP-cell $(s,t,b)$ the deadlines corresponding to the smallest
total cost $\sum_{j\in J(s,t)}\cost_{j}(d_{j})$ over all enumerated values
of $\rightb$.

Finally we output the EDF-schedule according to
the deadlines in the cell $(0,T,0)$. Since each job
finishes by time $T$ the latest (which we will prove in Lemma~\ref{lem:makespan}),
for each job $j\in J$ with $d_{j}=\infty$
we can equivalently define $d_{j}=T$.

\subsection{Feasibility\label{sec:correct} }

For each DP-cell $(s,t,b)$ we computed a set of deadlines for the jobs in $J(s,t)$. We prove that for these deadlines
there is a schedule in which each job in $J(s,t)$ completes by its deadline. This holds in particular for the DP-cell $(0, T, 0)$
whose deadlines we used for the schedule that we output.

\begin{lemma}\label{lem:correctnessDP}
Let $\left\{ d_{j}\right\} _{j\in J}$
be the deadlines computed by the algorithm for some DP-cell $(s,t,b)$.
There exists a schedule for $J(s,t)$ starting at time $b$ in which each job $j$ completes by its deadline~$d_{j}$.
\end{lemma}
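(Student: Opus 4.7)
The plan is to prove the lemma by induction on the interval length $t-s$, using the EDF characterization of Lemma~\ref{lem:EDF}. Since the schedule must start at time $b$, I treat each job's effective release time as $\max\{r_j, b\}$; then feasibility of the deadlines $\{d_j\}_{j \in J(s,t)}$ reduces to verifying, for every interval $[s', t']$ with $b \le s' \le t'$, the inequality
\[
\sum_{j \in J(s,t):\, \max\{r_j, b\} \ge s',\, d_j \le t'} p_j \;\le\; t' - s'.
\]
In the base case $t-s = 1$, the set $\rightjobs$ consists exactly of the jobs released at time $s$ (all assigned $d_j = \infty$), so the inequality reduces to the feasibility of the Lawler--Moore schedule for the jobs in $\leftjobs$ assigned $d_j = s$, which is guaranteed by Lemma~\ref{lem:Lawler}.

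In the inductive step I would partition $J(s,t)$ by how the algorithm assigns deadlines: $X \subseteq \leftjobs$ with $d_j = s$; $Y := \leftjobs \setminus X$ with $d_j = \infty$; $\rightjobs_2 := \{j \in \rightjobs : d_j^{(2)} = a\}$ with $d_j = \min\{d_j^{(1)}, a\}$; and $\rightjobs_1 := \{j \in \rightjobs : d_j^{(2)} > a\}$ with $d_j = d_j^{(2)}$. Three ingredients drive the argument. First, Lawler--Moore feasibility yields $\sum_{j \in X,\, \max\{r_j,b\} \ge s'} p_j \le \rightb - s'$ for every $b \le s' \le \rightb$. Second and third, the inductive hypothesis applied to the child cells $(s,a,\rightb)$ and $(a,t,\rightb)$ yields, via Lemma~\ref{lem:EDF}, the analogous EDF inequalities for the deadlines $d_j^{(1)}$ on $J(s,a)$ and $d_j^{(2)}$ on $J(a,t)$ for every $\rightb \le s'' \le t''$.

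The core of the proof is then a case analysis on $t'$ relative to $s$, $a$, $t$ with a sub-split on whether $s' \le \rightb$. The key structural facts are $\rightjobs \subseteq J(a,t)$, $\rightjobs_2 \subseteq J(s,a)$, and $d_j \le d_j^{(2)}$ for every $j \in \rightjobs$ with $d_j^{(2)} = a$ for $\rightjobs_2$. Hence, whenever $t' \ge a$, the $\rightjobs_1$- and $\rightjobs_2$-contributions merge into a single sum bounded by the right-child inequality; for $s \le t' < a$ only $\rightjobs_2$ contributes and the left-child inequality applies to $\rightjobs_2 \subseteq J(s,a)$. When $s' \le \rightb$ the $X$-contribution is at most $\rightb - s'$ and the $\rightjobs$-contribution is at most $t' - \rightb$ via the child bound with $s'' = \rightb$; when $s' > \rightb$ the $X$-contribution vanishes (since every $j \in X$ has $\max\{r_j,b\} \le \rightb$) and the child bound is applied directly with $s'' = s'$. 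In every subcase the pieces sum to $t' - s'$.

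The main obstacle is the careful bookkeeping of this case analysis and, in particular, the subtlety that jobs with $r_j = s - (t-s)$ belong to $\leftjobs$ yet also to both $J(s,a)$ and $J(a,t)$, where the subproblems assign them deadlines that the parent ignores in favor of Lawler--Moore's classification. This causes no issue because the $\rightjobs$-restricted sums appearing in our bound remain subsets of the $J(s,a)$- and $J(a,t)$-sums appearing in the inductive bounds, so the inequalities still hold. Jobs with $d_j = \infty$ (those in $Y$ and any job in $\rightjobs_1$ with $d_j^{(2)} = \infty$) never satisfy $d_j \le t'$ for finite $t'$, hence do not enter the EDF inequality and can simply be appended at the end of the constructed schedule.
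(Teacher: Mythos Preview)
Your proposal is correct and follows essentially the same inductive approach as the paper: induction on $t-s$, feasibility verified via the EDF characterization of Lemma~\ref{lem:EDF}, with the Lawler--Moore guarantee handling $X\subseteq\leftjobs$ and the inductive hypothesis on the two child cells $(s,a,\rightb)$ and $(a,t,\rightb)$ handling $\rightjobs$, splitting on whether $t'<a$ or $t'\ge a$. The one organizational difference is that the paper avoids your additional sub-split on $s'\le\rightb$ versus $s'>\rightb$: instead of verifying the EDF inequality for all of $J(s,t)$ at once, it argues separately that the jobs in $X$ fit into the interval $[b,\rightb]$ (immediate from Lawler--Moore) and that the jobs in $\rightjobs$ with their deadlines $d_j$ fit into $[\rightb,\infty)$ (via the child EDF inequalities with release times $\max\{r_j,\rightb\}$), and then simply concatenates the two sub-schedules. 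This modular decomposition is slightly cleaner and shorter, but the underlying ingredients are identical to yours; your more careful base case (noting that Lawler--Moore may still assign $d_j=s$ to some $\leftjobs$ jobs when $t-s=1$) and your explicit handling of the boundary jobs with $r_j=s-(t-s)$ are in fact more precise than the paper's presentation.
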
 

\begin{lemma}\label{lem:makespan}
When we run EDF with the job deadlines stored in the DP-cell $(0, T, 0)$, then each job completes by time $T$ the latest.
\end{lemma}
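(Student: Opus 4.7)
The plan is to invoke Lemma~\ref{lem:correctnessDP} at the root DP-cell $(0,T,0)$ to obtain the set of deadlines $\{d_j\}_{j\in J}$, taking values in $[0,T)\cup\{\infty\}$, which admits a feasible preemptive schedule of $J(0,T)=J$ starting at time $0$. Since EDF is invariant under relabeling the jobs of deadline $\infty$ with the common value $T$ (in both cases these are the latest deadlines and ties are broken consistently), it suffices to show that the modified deadlines $\tilde d_j := d_j$ if $d_j<\infty$ and $\tilde d_j := T$ otherwise still admit a feasible schedule: then by Lemma~\ref{lem:EDF} EDF meets every $\tilde d_j$, and $\tilde d_j \le T$ for every $j$.

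To verify feasibility of $\{\tilde d_j\}$ I will use the third characterization of Lemma~\ref{lem:EDF}: for every interval $[s,t]\subseteq[0,T]$,
\[
\sum_{j\in J:\, s\le r_j,\ \tilde d_j\le t} p_j \;\le\; t-s.
\]
For $t<T$, every job with $d_j=\infty$ now has $\tilde d_j = T > t$ and so drops out of the sum, which therefore coincides with the corresponding sum for $\{d_j\}$; that sum is at most $t-s$ by applying Lemma~\ref{lem:EDF} to the feasible schedule promised by Lemma~\ref{lem:correctnessDP}. For $t=T$, every job with $r_j\ge s$ satisfies $\tilde d_j\le T$, so the left-hand side is bounded by $\sum_{j\in J} p_j$; the defining inequality $T > \max_j r_j + \sum_j p_j$ then gives $T-s \ge T - \max_j r_j > \sum_{j\in J} p_j$ whenever $s\le\max_j r_j$, while the case $s>\max_j r_j$ is trivial because the sum is empty.

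No step here is genuinely hard; the only subtlety I expect is that the feasibility supplied by Lemma~\ref{lem:correctnessDP} speaks only about jobs with finite deadlines and therefore, by itself, does not bound when a job with $d_j=\infty$ completes. Bridging this gap is exactly the role of the $t=T$ case above, which is the only place where the specific choice of $T$ as a power of two strictly larger than $\max_j r_j + \sum_j p_j$ is used.
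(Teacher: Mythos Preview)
Your proof is correct, but it is considerably more elaborate than the paper's. The paper does not invoke Lemma~\ref{lem:correctnessDP} or the interval characterization of Lemma~\ref{lem:EDF} at all; it simply observes that EDF is work-conserving: the machine is never idle while a released job is still unfinished. Hence after time $\max_j r_j$ the machine runs continuously until all work is done, and since the total processing volume is $\sum_j p_j$, every job completes by $\max_j r_j + \sum_j p_j \le T$. This argument is oblivious to the actual deadline values and would hold for any set of deadlines whatsoever. Your route instead establishes that the modified deadlines $\tilde d_j \in [0,T]$ themselves admit a feasible schedule, and then lets Lemma~\ref{lem:EDF} conclude that EDF meets them. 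This buys you the slightly stronger statement that EDF in fact meets every deadline $\tilde d_j$ (not just the bound $T$), at the price of a detour through Lemma~\ref{lem:correctnessDP} and a case analysis on $t<T$ versus $t=T$ that the paper avoids entirely.
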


\subsection{Approximation ratio}
We will show that our computed solution is a 6-approximation.
In the following, let $\OPT$ denote an optimal schedule and for each
job $j\in J$ let $d_{j}^{\OPT}$ be the completion time of $j$
in $\OPT$.
For each interval $[s,t)\in \I$
there is one ``correct'' value of $b$ that
corresponds to $\OPT$.
We define that a cell $(s,t,b)$ is \emph{good} if $b$ is at most this value.
Intuitively, we will argue inductively that for
such good DP-cells our computed deadlines yield costs that are not much higher than the
corresponding costs in $\OPT$ for the jobs in $J(s,t)$.

\begin{definition}\label{def:goodDB} Consider a DP-cell $(s,t,b)$.
The cell $(s,t,b)$ is \emph{good} if $b$ is at most the total amount
of time during $[0, s]$ where the schedule $\OPT$ does \emph{not}
process jobs from $J(s,t)$.
\end{definition}

First, we prove a lemma for good DP-cells, which we will use to bound
the cost of the solution returned by the Lawler-Moore subroutine.
In this lemma, we define the sets $\leftjobs$ and $\rightjobs$
as in our algorithm above.

\begin{lemma}\label{lem:recLawler} Let $(s,t,b)$ be a good
DP-cell. Let $\leftjobs$ be the set of jobs $j\in J(s,t)$ for which \mbox{$r_j \le s - (t - s)$} and $\rightjobs = J(s,t)\setminus\leftjobs$. Let $\rightb$ be the total amount of time in $[0, s]$ during which $\OPT$ does not process jobs in $\rightjobs$.
Then there exist a schedule that processes all jobs $j\in\leftjobs$ with $d^{\OPT}_j \le s$ during~$[b, \rightb)$.
\end{lemma}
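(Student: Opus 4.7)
The plan is to apply Lemma \ref{lem:EDF} to the jobs $\leftjobs_s := \{ j \in \leftjobs : d_j^{\OPT} \le s \}$, treating each of them as having effective release time $\max(r_j, b)$ and common deadline $\rightb$. By that lemma a preemptive schedule processing all of $\leftjobs_s$ in $[b, \rightb)$ exists if and only if $\sum_{j \in \leftjobs_s :\, \max(r_j,b) \ge s_1} p_j \le \rightb - s_1$ holds for every $s_1 \in [b, \rightb]$, so the whole proof reduces to verifying this inequality for each $s_1$.

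To do so I would set up an accounting of $\OPT$ on $[0,s]$, split at the threshold $s-(t-s)$. Let $\alpha_1$ denote the total time $\OPT$ spends processing $\rightjobs$ in $[s-(t-s), s]$; since every $\rightjobs$-job is released strictly after $s-(t-s)$, this equals the total $\rightjobs$-processing of $\OPT$ in $[0,s]$, and hence $\rightb = s - \alpha_1$. Let $L_0, L_1$ be the amount of time $\OPT$ processes $\leftjobs$ in $[0, s-(t-s)]$ and $[s-(t-s), s]$ respectively, and let $C_0, C_1$ be the time $\OPT$ is either idle or processes jobs outside $J(s,t)$ in those two sub-intervals. Then $L_0 + C_0 = s - (t-s)$, $\alpha_1 + L_1 + C_1 = t - s$, and the goodness of $(s,t,b)$ gives $b \le C_0 + C_1$.

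With this bookkeeping, only three cases of $s_1$ need to be checked. For $s_1 = b$ the required $\sum_{j \in \leftjobs_s} p_j \le \rightb - b$ holds because each $j \in \leftjobs_s$ is fully processed by $\OPT$ inside $[0,s]$ within the $\leftjobs$-budget of size $L_0+L_1$, while $\rightb - b \ge (s - \alpha_1) - (C_0+C_1) = L_0+L_1$. For $s_1 > s-(t-s)$ the left-hand side is zero since every $j \in \leftjobs$ has $r_j \le s-(t-s)$. The remaining range $b < s_1 \le s-(t-s)$ is the crux: every $j \in \leftjobs_s$ with $r_j \ge s_1$ is processed by $\OPT$ entirely inside $[s_1, s]$, so their total size is at most the $\leftjobs$-processing $\OPT$ performs there, which is bounded by $(s-(t-s)-s_1) + L_1$; substituting $\alpha_1 = (t-s) - L_1 - C_1$ reduces the desired bound $\le \rightb - s_1$ to the trivial $C_1 \ge 0$.

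The main obstacle is this third case: one must exploit that no $\rightjobs$-job can run before $s-(t-s)$, which is precisely what forces $\OPT$ to spend that prefix on other work and thereby leaves enough room for $\leftjobs_s$ to be packed into $[b, \rightb)$. Once the decomposition above is in place, the rest is pure arithmetic.
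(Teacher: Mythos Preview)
Your proof is correct and follows essentially the same approach as the paper: both apply the EDF feasibility criterion of Lemma~\ref{lem:EDF} with release times $\max\{r_j,b\}$ and common deadline $\rightb$, and both verify the interval condition via a case split on whether $s_1=b$ or $b<s_1\le s-(t-s)$ (the case $s_1>s-(t-s)$ being vacuous). The only cosmetic difference is that the paper first assumes w.l.o.g.\ that $b$ equals the maximal good value $C_0+C_1$, which lets it skip your explicit $L_0,L_1,C_0,C_1$ bookkeeping and argue the second case directly from ``$\OPT$ spends $s-\rightb$ on $\rightjobs$ inside $[s_1,s]$''; your more detailed accounting arrives at the same inequality.
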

In the next lemma, we bound the costs due to the deadlines that we computed in the good DP-cells.
\begin{lemma}\label{lem:approx}
Consider a good DP-cell $(s, t, b)$. When we run EDF for the jobs $J(s,t)$
with the deadlines stored in $(s, t, b)$ starting at time $b$ , then the cost of the resulting schedule with respect to $\sum_j\cost_{j}(d_j)$ is bounded by
\[
2\hspace{-1em}\sum_{\substack{j\in J(s, t)\\
s<d_{j}^{\OPT}\le t
}
}\hspace{-1em}w_{j}(d_{j}^{\OPT}-r_{j})+4\sum_{j\in J(s, t)}w_{j}\cdot\len([r_{j},d_{j}^{\OPT}]\cap[s,t]).
\]
\end{lemma}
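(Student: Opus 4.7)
The plan is to prove the bound by induction on the interval length $t-s$, combining Lemma~\ref{lem:recLawler} for the Lawler--Moore subproblem on $\leftjobs$ with the inductive hypothesis applied to the two child cells handling $\rightjobs$. In the base case $t-s=1$ the set $\rightjobs$ consists only of jobs released at $s$, all of which the algorithm assigns $d_j=\infty$, and the bound reduces to the $\leftjobs$ argument described below. For the inductive step I would split
\[
\sum_{j\in J(s,t)}\cost_j(d_j)=\sum_{j\in\leftjobs}\cost_j(d_j)+\sum_{j\in\rightjobs}\cost_j(d_j)
\]
and bound the two sums separately. Let $\rightb^\star$ denote the total amount of time in $[0,s]$ during which $\OPT$ does not process jobs in $\rightjobs$. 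I would first verify that $\rightb^\star$ lies in the enumerated range $[\max\{b,s-(t-s)\},s]$: goodness of $(s,t,b)$ combined with $\rightjobs\subseteq J(s,t)$ yields $\rightb^\star\ge b$, and every $j\in\rightjobs$ satisfies $r_j>s-(t-s)$, so nothing from $\rightjobs$ can be processed before $s-(t-s)$. Since the DP picks the minimum-cost guess, it suffices to analyse the cost for the specific choice $\rightb=\rightb^\star$.

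For the $\leftjobs$ contribution the plan is to invoke Lemma~\ref{lem:recLawler} with this $\rightb^\star$ to obtain a reference schedule that completes every $j\in\leftjobs$ with $d_j^{\OPT}\le s$ by time $\rightb^\star$. Viewed as a feasible solution to the Lawler--Moore instance (common deadline $\rightb^\star$, penalty $w_j(t-r_j)$), it has cost $\sum_{j\in\leftjobs,\,d_j^{\OPT}>s}w_j(t-r_j)$, so the Lawler--Moore optimum (which equals the $\leftjobs$ cost in our algorithm) is at most the same quantity. To match the two types of terms in the lemma, I would split by the value of $d_j^{\OPT}$: using $r_j\le s-(t-s)$ gives $t-r_j\le 2(s-r_j)$, so jobs with $s<d_j^{\OPT}\le t$ contribute at most $2w_j(d_j^{\OPT}-r_j)$; using the earliest-starting-time bound $r_j\ge b_0\ge s-3(t-s)$ gives $t-r_j\le 4(t-s)$, so jobs with $d_j^{\OPT}>t$ contribute at most $4w_j(t-s)=4w_j\cdot\len([r_j,d_j^{\OPT}]\cap[s,t])$.

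For the $\rightjobs$ contribution I would first check that both child cells $(s,a,\rightb^\star)$ and $(a,t,\rightb^\star)$ are good. Since $J(s,a)$ and $J(a,t)$ are (up to the integer boundary $r_j=s-(t-s)$, which the definitions place in $\leftjobs$) subsets of $\rightjobs$, the time $\OPT$ avoids $J(s,a)$ in $[0,s]$ and the time it avoids $J(a,t)$ in $[0,a]$ are both at least $\rightb^\star$. Next I would establish $\cost_j(d_j)\le\cost_j^{(1)}(d_j^{(1)})+\cost_j^{(2)}(d_j^{(2)})$ for every $j\in\rightjobs$, with the convention that $\cost_j^{(1)}(d_j^{(1)})=0$ when $j\notin J(s,a)$ (a case in which feasibility, via Lemma~\ref{lem:correctnessDP} applied to the right subcell, forces $d_j^{(2)}>a$). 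When $d_j^{(2)}>a$ the combined deadline is $d_j^{(2)}$ itself and the two costs agree; when $d_j^{(2)}=a$ the three sub-cases $d_j^{(1)}=s$, $d_j^{(1)}\in(s,a)$, $d_j^{(1)}=\infty$ yield combined deadlines $s$, $d_j^{(1)}$, $a$ respectively, and in each one $\cost_j(d_j)=\cost_j^{(1)}(d_j^{(1)})$ while $\cost_j^{(2)}(a)=0$. Summing and applying the inductive hypothesis to both subcells then bounds $\sum_{j\in\rightjobs}\cost_j(d_j)$.

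Finally I would assemble the pieces. The sets $\{j\in J(s,a):s<d_j^{\OPT}\le a\}$ and $\{j\in J(a,t):a<d_j^{\OPT}\le t\}$ are disjoint subsets of $\{j\in\rightjobs:s<d_j^{\OPT}\le t\}$, so their first-type contributions add up to at most $2\sum_{j\in\rightjobs,\,s<d_j^{\OPT}\le t}w_j(d_j^{\OPT}-r_j)$. The identity $\len([r_j,d_j^{\OPT}]\cap[s,a])+\len([r_j,d_j^{\OPT}]\cap[a,t])=\len([r_j,d_j^{\OPT}]\cap[s,t])$ makes the second-type contributions telescope into $4\sum_{j\in\rightjobs}w_j\len([r_j,d_j^{\OPT}]\cap[s,t])$. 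Combining with the $\leftjobs$ bounds extends both sums to all of $J(s,t)$, matching the claimed inequality. The hardest part will be the case analysis of the combination rule $d_j=d_j^{(2)}$ versus $\min\{d_j^{(1)},a\}$ and verifying in each sub-case that the costs match, together with carefully tracking how the goodness condition is inherited so that Lemma~\ref{lem:recLawler} and the inductive hypothesis can both be invoked at the chosen $\rightb=\rightb^\star$.
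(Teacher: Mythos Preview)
Your proposal is correct and follows essentially the same approach as the paper: induction on $t-s$, fixing $\rightb$ to the $\OPT$-driven value, bounding the $\leftjobs$ contribution via Lemma~\ref{lem:recLawler} together with the two cases $d_j^{\OPT}\le t$ versus $d_j^{\OPT}>t$, and bounding the $\rightjobs$ contribution via the inequality $\cost_j(d_j)\le\cost_j^{(1)}(d_j^{(1)})+\cost_j^{(2)}(d_j^{(2)})$ plus the inductive hypothesis applied to the two good child cells, with the $\len$ terms telescoping over $[s,a]$ and $[a,t]$. You even flag the integer boundary $r_j=s-(t-s)$ (and the case $j\notin J(s,a)$) more carefully than the paper does.
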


\begin{lemma}\label{lem:approxFactor} We output a schedule whose cost is
bounded by $6\cdot\OPT$.\end{lemma}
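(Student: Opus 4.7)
The plan is to instantiate Lemma~\ref{lem:approx} at the root DP-cell $(0,T,0)$ and show that each of the two terms it produces is bounded by $\OPT$, so that their weighted sum yields the factor $6$. To start, I would verify that $(0,T,0)$ is good in the sense of Definition~\ref{def:goodDB}. Since $s=0$, the interval $[0,s]$ has length $0$, so trivially $b=0$ is at most the amount of time in $[0,s]$ during which $\OPT$ avoids jobs in $J(0,T)=J$. Moreover, by the choice of $T$ we have $r_j<T$ for every $j\in J$, so $J(0,T)=J$, and the EDF schedule that the algorithm outputs uses precisely the deadlines stored in this cell.

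Next, I would apply Lemma~\ref{lem:approx} with $s=0$ and $t=T$, which gives
\[
\sum_{j\in J}\cost_{j}(d_{j})\ \le\ 2\sum_{\substack{j\in J\\ 0<d_{j}^{\OPT}\le T}}w_{j}(d_{j}^{\OPT}-r_{j})\ +\ 4\sum_{j\in J}w_{j}\cdot\len\bigl([r_{j},d_{j}^{\OPT}]\cap[0,T]\bigr).
\]
Because $T$ was chosen to exceed $\max_j r_j+\sum_j p_j$, an optimal schedule (without unnecessary idle time) completes every job in $(0,T]$. Hence $d_j^{\OPT}\in(0,T]$ and $[r_j,d_j^{\OPT}]\cap[0,T]=[r_j,d_j^{\OPT}]$ has length $d_j^{\OPT}-r_j=F_j^{\OPT}$, so both sums collapse to $\sum_{j\in J}w_j F_j^{\OPT}=\OPT$. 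Thus the DP-cost in the root cell is at most $(2+4)\cdot\OPT=6\cdot\OPT$.

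Finally, I would relate the DP-cost $\sum_j\cost_j(d_j)$ to the actual weighted flow time of the output schedule. By Lemmas~\ref{lem:correctnessDP} and~\ref{lem:makespan}, running EDF with the deadlines from $(0,T,0)$ (replacing $d_j=\infty$ by $d_j=T$) produces a feasible schedule in which every job completes by its deadline, so the true flow time satisfies $F_j\le d_j-r_j$. Since $s=0$, for every $j$ with $d_j>0$ the DP-cost term is $\cost_j(d_j)=w_j(\min\{d_j,T\}-r_j)=w_j(d_j-r_j)\ge w_jF_j$, while jobs with $d_j=0$ contribute $0=w_jF_j$. Summing yields $\sum_{j\in J}w_jF_j\le\sum_{j\in J}\cost_j(d_j)\le 6\cdot\OPT$.

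There is no real obstacle here, as Lemma~\ref{lem:approx} has already done the inductive heavy lifting; the only points requiring a brief check are (i) that the root cell is good, which is immediate, and (ii) that $[r_j,d_j^{\OPT}]\subseteq[0,T]$ so that the $\len(\cdot\cap[0,T])$ factor does not shrink $F_j^{\OPT}$, which follows from the definition of $T$.
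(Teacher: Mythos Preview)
Your proposal is correct and follows essentially the same route as the paper: instantiate Lemma~\ref{lem:approx} at the root cell $(0,T,0)$, observe it is good, and simplify both sums to $\OPT$ using $[r_j,d_j^{\OPT}]\subseteq[0,T]$. If anything, you are slightly more careful than the paper, which stops at the bound $6\cdot\OPT$ for $\sum_j\cost_j(d_j)$ and leaves implicit the (easy) step you spell out, namely that the actual weighted flow time of the EDF schedule is bounded by $\sum_j\cost_j(d_j)$ because every job finishes by its deadline.
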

\begin{proof}
Consider the DP-cell $(0, T, 0)$, where $s = 0$ and $t = T$.
This is clearly a good DP-cell.
Thus, Lemma~\ref{lem:approx} yields an upper bound for
the cost of its computed schedule of 
\begin{align*}
 & 2\hspace{-1em}\sum_{\substack{j\in J(0, T) \\
0<d_{j}^{\OPT}\le T
}
}\hspace{-1em}w_{j}(d_{j}^{\OPT}-r_{j})+4\sum_{j\in J(0, T)}w_{j}\cdot\len([r_{j},d_{j}^{\OPT}]\cap[0,T])\\
 & \leq2\sum_{j\in J}w_{j}(d_{j}^{\OPT}-r_{j})+4\sum_{j\in J}w_{j}(d_{j}^{\OPT}-r_{j})\\
 & =6\sum_{j\in J}w_{j}(d_{j}^{\OPT}-r_{j})=6\cdot\OPT . \qedhere
\end{align*}
\end{proof}

\subsection{Running time}
Now we prove that our algorithm has pseudopolynomial running
time. Intuitively, this follows since the number of DP-cells is bounded
by $O(T^{3})$ and we try at most $O(T)$ values for $\rightb$ when we compute the solution for a DP-cell. 

\begin{lemma} The running time of the algorithm is bounded by $O(nT^{4})$.
\end{lemma}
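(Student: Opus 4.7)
The plan is to bound the number of DP-cells, the work per cell, and multiply. First I would count the number of cells. The intervals $[s,t)\in\I$ arise from a balanced binary tree over $[0,T)$ with $T$ unit-length leaves, so $|\I|=O(T)$. For each such interval, the parameter $b$ ranges over integers in $[b_{0},s]\subseteq[0,T]$, contributing another factor of $O(T)$. Hence the total number of DP-cells is at most $O(T^{2})$, which is already within the claimed bound (the paper's $O(T^{3})$ is a convenient loose overestimate).

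Next I would bound the work to fill a single cell $(s,t,b)$. The outer step enumerates all integer candidates for $\rightb$ in $[\max\{b,s-(t-s)\},s]$, which are $O(T)$ many. For each such $\rightb$ I would account for three contributions: (i) invoking the Lawler--Moore algorithm of Lemma~\ref{lem:Lawler} on the instance $\widehat{J}=\leftjobs\subseteq J$ with deadline $\rightb$, which runs in time $O(|\widehat{J}|\sum_{j\in\widehat{J}}\widehat{p}_{j})=O(nT)$ since $|\widehat{J}|\leq n$ and $\sum_{j}p_{j}\leq T$; (ii) extracting the resulting deadlines $d_{j}\in\{s,\infty\}$ for $j\in\leftjobs$ in time $O(n)$; and (iii) combining the deadlines from the already-computed child cells $(s,a,\rightb)$ and $(a,t,\rightb)$ via the case distinction defining $d_{j}$, which costs $O(n)$ total. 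Summing the cost over the resulting deadlines also takes $O(n)$. Thus a single cell is filled in time $O(T)\cdot O(nT)=O(nT^{2})$.

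For the order of computation, I would process cells in non-decreasing order of $t-s$, so that when we work on $(s,t,b)$ the entries for $(s,a,\rightb)$ and $(a,t,\rightb)$ (which have strictly smaller interval length) are already available; this justifies the $O(n)$ lookup cost in step (iii). Multiplying the per-cell cost by the number of cells gives a total running time of $O(T^{2})\cdot O(nT^{2})=O(nT^{4})$, as claimed. Finally, the EDF schedule produced from the deadlines stored at the root cell $(0,T,0)$ can be output in time $O(n\log n + T)$, which is absorbed into the bound.

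I do not expect a genuine obstacle here; the only point requiring a little care is checking that the combination step for the recursive case indeed costs only $O(n)$ per trial of $\rightb$ (rather than recursing, we merely look up two stored deadline vectors and take the pointwise minimum according to the stated rule), and that the Lawler--Moore subroutine is invoked with $\sum_{j}\widehat{p}_{j}\le T$ so that its runtime bound is $O(nT)$ rather than something larger.
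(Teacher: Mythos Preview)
Your proposal is correct and follows essentially the same approach as the paper's proof: count $|\I|=O(T)$ intervals and $O(T)$ choices of $b$ for $O(T^{2})$ cells, enumerate $O(T)$ values of $\rightb$ per cell, and charge each Lawler--Moore call $O(nT)$ to arrive at $O(nT^{4})$. Your write-up is in fact more detailed than the paper's, which simply states that the remaining per-guess operations are dominated by the $O(nT)$ Lawler--Moore call.
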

\begin{proof}
Since the intervals $\mathcal I$ form a complete binary tree with $T$ leafs,
we have $|\mathcal I| \le O(T)$. Hence, there are $O(T)$ many
combinations $s, t$ for DP-cells. Since
also $b\in\mathbb Z$ with $0\le b\leq T$ there are at most $O(T^{2})$ many DP-cells.
When solving the DP, the algorithm tries $O(T)$ values for $\rightb$.
For each of those combinations,
we call the algorithm due to Lemma~\ref{lem:Lawler}, which has a
running time of $O(nT)$. The remaining operations are dominated by
this term. Thus, the total running time is at most
$O(nT^{4})$. 
\end{proof}
\begin{theorem}
The algorithm in this section is a pseudopolynomial 
time $6$-approximation algorithm
for minimizing sum of weighted flow times on a single machine with preemptions.
\end{theorem}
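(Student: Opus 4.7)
The plan is to simply assemble the theorem statement from the three main pieces that have already been proved in this section: feasibility of the output schedule, the approximation ratio, and the running time bound. Nothing new has to be argued; the work is to state precisely which lemma gives which component of the claim.

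First, I would explain why $T$ is pseudopolynomial in the input size. By definition $T$ is the smallest power of $2$ exceeding $\max_{j\in J} r_j + \sum_{j\in J} p_j$, so $T \le 2(\max_j r_j + \sum_j p_j)$, which is bounded by a polynomial in $n$ and the largest number appearing in the input. Consequently the running time $O(nT^4)$ established in the previous lemma is pseudopolynomial.

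Next, I would invoke Lemma~\ref{lem:correctnessDP} applied to the root DP-cell $(0,T,0)$ to conclude that there exists a schedule for $J = J(0,T)$ starting at time $0$ in which every job $j$ meets its computed deadline $d_j$. By Lemma~\ref{lem:EDF}, the EDF-schedule with respect to these deadlines is feasible. Combined with Lemma~\ref{lem:makespan}, every job completes by time $T$, so even jobs with $d_j = \infty$ are completed in the output schedule, and $C_j \le d_j$ (interpreting $d_j = \infty$ as $d_j = T$) holds for all $j\in J$.

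Finally, I would invoke Lemma~\ref{lem:approxFactor}, which bounds the cost of the output schedule, measured as $\sum_{j\in J}\cost_j(d_j)$, by $6\cdot\OPT$. Because $\cost_j(d_j) = w_j(\min\{d_j,T\} - r_j) \ge w_j(C_j - r_j) = w_j F_j$ for every $j\in J$ in the root cell (where $s=0$, so no deadline $d_j = s$ can give $\cost_j = 0$ misleadingly; indeed for $s = 0$ every $j$ has $d_j > 0 = s$), the weighted flow time of the EDF-schedule is at most $6\cdot\OPT$. Combining these three ingredients yields the theorem. The only mildly non-obvious observation is the bound on $T$; otherwise the proof is a one-paragraph synthesis of the preceding lemmas.
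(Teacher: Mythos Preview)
Your proposal is correct and follows the same approach as the paper, which states the theorem without proof as an immediate consequence of the preceding lemmas on feasibility, approximation ratio, and running time. You are in fact more careful than the paper in one place: you explicitly bridge the gap between the bound $\sum_{j}\cost_j(d_j)\le 6\cdot\OPT$ from Lemma~\ref{lem:approxFactor} and the actual weighted flow time $\sum_j w_jF_j$ of the EDF-schedule, by observing that $C_j\le \min\{d_j,T\}$ (from feasibility and Lemma~\ref{lem:makespan}) and that no job can receive deadline $d_j=0$ in the root cell (since Lemma~\ref{lem:correctnessDP} guarantees a feasible schedule starting at $b=0$, and $p_j\ge 1$).
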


In Appendix~\ref{sec:polytime}, we will show how to transform the algorithm above into a \emph{polynomial} time
$(6+\epsilon)$-approximation algorithm. We use a similar argumentation as in the classical FPTAS for the knapsack problem, e.g., \cite{vazirani2001approximation}.
We round the costs to a discrete set of values of polynomial size. Then, we define a DP-cell for each combination of an interval $[s,t)\in \I$ (as before) and a target cost $B$ (instead of $b$). This yields a polynomial number of DP-cells. Then, for each resulting DP-cell $(s,t,B)$, we compute the best value of $b$ for which we find a set of deadlines that result in a cost of at most $B$.

Moreover, we show that we can generalize our algorithm to a polynomial time $O(1)$-approximation algorithm for minimizing
$(\sum_{j}w_{j}(F_{j})^{p})^{1/p}$ for any constant $p$ with $0<p<\infty$.
\begin{theorem}For any constant $p$ with $0<p<\infty$ there is a polynomial time $O(1)$-approximation algorithm
for minimizing $(\sum_{j}w_{j}(F_{j})^{p})^{1/p}$ on a single machine with preemptions.
\end{theorem}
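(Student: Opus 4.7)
The plan is to execute the DP from Section~\ref{sec:algorithm} almost verbatim, changing only the per-cell cost and redoing the approximation analysis so that the constants depend only on~$p$. Since $x\mapsto x^{1/p}$ is monotone, it suffices to bound $\sum_j w_j F_j^{\mathrm{alg},p}\le C(p)\cdot\sum_j w_j F_j^{\mathrm{opt},p}$ for some constant $C(p)$ and then take $p$-th roots to obtain a $C(p)^{1/p}$-approximation for the $p$-norm objective. I would therefore redefine the per-cell cost as $\cost_j(d):=0$ if $d=s$ and $\cost_j(d):=w_j(\min\{d,t\}-r_j)^p$ otherwise, leaving the intervals in $\I$, the recursive combination of $d_j^{(1)},d_j^{(2)}$, and the Lawler--Moore subroutine (which takes arbitrary nonnegative integer penalties and whose running time in Lemma~\ref{lem:Lawler} depends only on processing times) unchanged. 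Lemmas~\ref{lem:correctnessDP} and~\ref{lem:makespan} still deliver a feasible EDF output schedule since they do not reference the cost function.

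For the approximation ratio I would redo the proof of Lemma~\ref{lem:approx} with the new cost. The only step of that proof that depends on the precise shape of $\cost_j$ is the bound $\cost_j(\infty)\le 2\cost_j(t')$ for $j\in\leftjobs$ and $t'\in(s,t]$, which followed from $t-r_j\le 2(t'-r_j)$ since $r_j\le s-(t-s)$; under the $p$-th power cost this becomes $\cost_j(\infty)\le 2^p\cost_j(t')$. Threading this factor through the same inductive combination and handling the merging of the two child subproblems via $(a+b)^p\le 2^p(a^p+b^p)$ for $p\ge 1$ (resp.\ $(a+b)^p\le a^p+b^p$ for $0<p<1$) should yield, for every good DP-cell,
\[
\sum_{j\in J(s,t)}\cost_j(d_j)\;\le\; C_1(p)\hspace{-1em}\sum_{\substack{j\in J(s,t)\\ s<d_j^{\OPT}\le t}}\hspace{-1em}w_j(d_j^{\OPT}-r_j)^p\;+\;C_2(p)\sum_{j\in J(s,t)}w_j\cdot\len([r_j,d_j^{\OPT}]\cap[s,t])^p
\]
with $C_1(p),C_2(p)=O(2^p)$. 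Applied to the root cell $(0,T,0)$ both sums collapse to $\sum_j w_j(d_j^{\OPT}-r_j)^p$, so after taking $p$-th roots the approximation ratio is $(C_1(p)+C_2(p))^{1/p}=O(1)$ for any constant $p$, consistent with the quantitative bound (ratio $\le 6+\varepsilon$ for $p\ge 1$, tending to $2+\varepsilon$ as $p\to\infty$) stated in the introduction.

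The polynomial-time version comes from the FPTAS-style rounding of Appendix~\ref{sec:polytime}: one rounds the new per-job contributions $w_j(\min\{d,t\}-r_j)^p$ into geometrically spaced buckets and re-indexes DP-cells by a target cost $B$ in place of $b$, which keeps the number of cells polynomial because for constant $p$ the maximum per-job cost is polynomially bounded in the input. The main obstacle I anticipate is the inductive combination step in the modified Lemma~\ref{lem:approx}: in the linear case the two children's $\cost_j$ values telescope along $[s,t)$, but for $p\ne 1$ the function $x\mapsto x^p$ is not additive, so one must absorb the super-/sub-additivity via $(a+b)^p$ at each level while ensuring that the second sum at the root still collapses to $(d_j^{\OPT}-r_j)^p$ rather than a fragmented quantity whose constants accumulate across levels of the binary tree $\I$.
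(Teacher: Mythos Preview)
Your plan correctly isolates the only non-trivial issue---the failure of $\len([r_j,d_j^{\OPT}]\cap[s,t])^p$ to add up across the two children---but it does not resolve it, and the inequality you invoke points in the wrong direction. The induction step for the second sum requires
\[
\len([r_j,d_j^{\OPT}]\cap[s,a])^p+\len([r_j,d_j^{\OPT}]\cap[a,t])^p\;\le\;c\cdot\len([r_j,d_j^{\OPT}]\cap[s,t])^p
\]
for a constant $c$ independent of the level. For $p\ge 1$ this holds with $c=1$ (since $x^p+y^p\le(x+y)^p$), so your scheme does give an $O(1)$ ratio there, namely $(2^p+4^p)^{1/p}$; note, however, that this tends to $4$ rather than the $2+\varepsilon$ advertised in the introduction. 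For $0<p<1$ the inequality reverses; the best constant is $c=2^{1-p}>1$, and this factor compounds over the $\Theta(\log T)$ levels of the tree $\I$, blowing the bound up to $T^{\Theta(1-p)}$. Since the theorem explicitly covers all $p\in(0,\infty)$, this is a genuine gap.

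The paper's fix is to choose the second-sum quantity so that it telescopes \emph{exactly} for every $p$: instead of $\len([r_j,d_j^{\OPT}]\cap[s,t])^p$ one tracks
\[
\bigl(\hinten([r_j,d_j^{\OPT}]\cap[s,t])-r_j\bigr)^p-\bigl(\vorne([r_j,d_j^{\OPT}]\cap[s,t])-r_j\bigr)^p,
\]
a difference of $p$-th powers anchored at $r_j$. When the interval $[r_j,d_j^{\OPT}]$ crosses $a$, the right endpoint on $[s,a]$ coincides with the left endpoint on $[a,t]$, so the two child terms cancel in the middle and sum to the parent term with no loss, for any $p$. The Lawler--Moore bound for $j\in\leftjobs$ with $d_j^{\OPT}>t$ then becomes $w_j(t-r_j)^p\le C\bigl[(t-r_j)^p-(s-r_j)^p\bigr]$; from $r_j\ge s-3(t-s)$ one gets $(s-r_j)/(t-r_j)\le 3/4$ and hence $C=4^p/(4^p-3^p)$, finite for every fixed $p>0$. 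This yields the ratio $\bigl(2^p+\tfrac{4^p}{4^p-3^p}\bigr)^{1/p}$, which equals $6$ at $p=1$ and tends to $2$ as $p\to\infty$.
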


\section{QPTAS for multiple machines}\label{sec:QPTAS}

In this section, we present a QPTAS for the problem of computing a
possibly preemptive schedule for a set of jobs $J$ on a set of $m$
unrelated machines $M$, where for each job $j\in J$ we are given
a release time $r_{j}\in\N_{0}$ and a processing time $p_{ij}\in\N\cup\{\infty\}$
for each machine $i\in[m]$, while minimizing the weighted $\ell_{p}$-norm
of the jobs' flow times, i.e., $(\sum_{j}w_{j}(C_{j}-r_{j})^{p})^{1/p}$
for some given $p$ with $0<p<\infty$. We obtain this result in the
settings with and without the possibility to migrate a job, i.e.,
to preempt it and resume it (later) on \emph{different }machine. For
any fixed $\varepsilon>0$ and $m$, our algorithm has quasi-polynomial
running time, assuming that all finite values $p_{ij}$ are quasi-polynomially
bounded in $n$. In the remainder of the section we assume that the
jobs are indexed by $J=\{1,2,\dotsc,n\}$ in non-decreasing order
of release times. Similarly as before, let $T:=\max_{j}r_{j}+np_{\max}$
which is hence an upper bound on the makespan of any schedule without
unnecessary idle times. We assume w.l.o.g.~that $1/\epsilon\in\N$.

We first discretize the time axis; specifically, in our computed schedule
all jobs will be started, preempted, and resumed only at (discrete)
time steps that are integral multiples of $\delta:=\varepsilon/n^{m}$.
Note that still jobs may still finish at non-discrete times if their
remaining processing volume needs less than $\delta$ units of time
on some machine to finish. We remark that $1/\delta\in\N$ and that
there are only a pseudopolynomial number of integral multiples of
$\delta$ in $[0,T]$. We prove in the next lemma that this discretization
can increase the optimal objective function value by at most a factor
of $1+\epsilon$.

\begin{lemma}\label{lem:chooseDelta} There exists a $(1+\varepsilon)$-approximative
schedule in which all jobs are started, preempted, and resumed only
at integral multiples of $\delta$. \end{lemma}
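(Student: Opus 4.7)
I would take an optimal schedule $S^{\OPT}$ and construct from it a $\delta$-aligned schedule $S$ by ``padding'' each time segment of $S^{\OPT}$ to a length divisible by $\delta$. The first step is to argue that, without loss of generality, $S^{\OPT}$ has at most $K \le (n+1)^m = O(n^m)$ \emph{events} (configuration changes), where a configuration is the $m$-tuple specifying which job each machine is currently processing (with ``idle'' as an option). Since the total number of possible configurations is $(n+1)^m$, and a rearrangement should allow us to assume each configuration is used during a single contiguous time block, the total number of events is bounded by $(n+1)^m$.

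Given this structural bound, let $t_0 = 0 < t_1 < \cdots < t_K$ denote the event times of $S^{\OPT}$, with the fixed configuration $\sigma_\ell$ active on $[t_\ell, t_{\ell+1})$. I define $s_0 := 0$ and $s_{\ell+1} := s_\ell + \lceil (t_{\ell+1} - t_\ell)/\delta \rceil \delta$, so each $s_\ell$ is an integer multiple of $\delta$ and the $\ell$-th segment of $S$ has length in $[t_{\ell+1} - t_\ell,\, t_{\ell+1} - t_\ell + \delta)$. The schedule $S$ runs configuration $\sigma_\ell$ during $[s_\ell, s_{\ell+1})$, so all starts, preemptions, and resumes of $S$ occur at multiples of $\delta$. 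Since every segment is at least as long in $S$ as in $S^{\OPT}$, and the same (job, machine) assignments apply, each job receives at least as much processing and therefore completes. The cumulative time shift satisfies $s_\ell - t_\ell \le K\delta \le n^m \cdot (\varepsilon/n^m) = \varepsilon$, so every completion time in $S$ exceeds $C_j^{\OPT}$ by at most $\varepsilon$, and thus $F_j \le F_j^{\OPT} + \varepsilon$. Because processing times are positive integers, $F_j^{\OPT} \ge 1$, so $F_j \le (1+\varepsilon) F_j^{\OPT}$ for every job; by monotonicity of the weighted $\ell_p$-norm this yields $(\sum_j w_j F_j^p)^{1/p} \le (1+\varepsilon)(\sum_j w_j (F_j^{\OPT})^p)^{1/p}$.

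The main obstacle is the structural simplification bounding $K$ by $O(n^m)$. A naive rearrangement making each configuration contiguous must be handled with care, because flow time objectives are sensitive to release dates and to the precise timing of each completion: swapping time blocks may violate release dates or change the objective. If a direct rearrangement fails, the fall-back options are (a) a per-job delay analysis that bounds the number of events falling inside the window $[r_j, C_j^{\OPT}]$ individually, so that each job is delayed by at most $\varepsilon$ in $S$, or (b) an LP-rounding approach that constructs $S$ directly from a fractional solution corresponding to $S^{\OPT}$, bypassing the need to analyze its internal event structure.
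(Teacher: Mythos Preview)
Your high-level plan matches the paper's: bound the number of distinct machine configurations by $(n+1)^m$, round each configuration's duration up to a multiple of $\delta$, and observe that every completion time shifts by at most $(n+1)^m\,\delta = O(\varepsilon)$; then use $F_j^{\OPT}\ge 1$ to turn the additive delay into a $(1+\varepsilon)$ multiplicative loss. The final paragraph of your argument is fine and is exactly what the paper does.

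The gap you already flag is genuine, however, and neither of your suggested fixes closes it. You cannot in general rearrange an optimal schedule so that every configuration $\sigma\in(J\cup\{\bot\})^m$ occupies a single contiguous block without moving completion times: if $(A,\bot)$ runs on $[0,1]$, then $(B,\bot)$ on $[1,2]$, then $(A,\bot)$ again on $[2,3]$, merging the two $(A,\bot)$ blocks forces $B$ to finish later. Fallback~(a) does not help either: in your padding construction the delay at job $j$'s completion is $s_\ell-t_\ell$ for the segment index $\ell$ at which $j$ finishes, i.e.\ it depends on the total number of events before $C_j^{\OPT}$, not only those inside $[r_j,C_j^{\OPT}]$; and without a structural bound that number is unbounded.

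The paper closes the gap with a short EDF trick that you should adopt. Treat each configuration $\sigma$ as a single \emph{tuple-job} of length $\ell(\sigma)$ (the total time $\OPT$ spends in $\sigma$), release time $\max_i r_{\sigma_i}$, and deadline $\min_i C^{\OPT}_{\sigma_i}$. The optimal schedule witnesses that this single-machine deadline instance is feasible, so EDF meets all deadlines as well. Now round every tuple-job's length up to a multiple of $\delta$ and rerun EDF: since release times are integers (hence multiples of $\delta$) and processing times are now multiples of $\delta$, all starts, preemptions and resumes lie on the $\delta$-grid; and the delay of any tuple-job past its deadline is at most the total added processing, i.e.\ at most $(n+1)^m\delta\le O(\varepsilon)$. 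Because every tuple-job containing $j$ has deadline at most $C_j^{\OPT}$, job $j$'s completion time increases by at most $O(\varepsilon)$, and your $(1+\varepsilon)$ bound follows.
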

Next, for each job $j$ we construct a set $D(j)$ of $O(\log(T)/\varepsilon)$
potential deadlines such that between any two consecutive deadlines
the flow time of $j$ increases by at most a factor of $1+\epsilon$
(see Figure~\ref{fig:intervals}). This yields an interval for each pair of two consecutive
deadlines in $D(j)$. We include the release time $r_{j}$ in the
set $D(j)$ so that the first interval starts with $r_{j}$. Later,
we will decide for how much time we will execute $j$ on each machine
during each of these intervals; there are only $(n/\delta)^{O(m\log(T)/\varepsilon)}$
possibilities for this. We will ensure that the sets $D(j)$ for the
jobs $j\in J$ use a hierarchical structure, i.e., for each job $j$
the intervals for $j+1$ are a refinement of the intervals for $j$.
We note that a similar structure was used by Batra, Garg, and Kumar~\cite{Batra0K18},
however, the resulting loss was higher than only a factor of $1+\epsilon$.

\begin{lemma}\label{lem:deadlineProperties} In time polynomial
in $n$ and $\log(T)/\varepsilon$ we can construct a set of deadlines
$D(j)\subseteq\delta\mathbb{N}_{0}:=\{\delta k:k\in\N_{0}\}$ for
each job $j\in J$ with the following properties:
\begin{itemize}
\item for each job $j$ it holds that $|D(j)|\le O(\log(T)/\varepsilon)$,
\item for each job $j$ it holds that $r_{j}\in D(j)$, $r_{j}+1\in D(j)$,
and $T\in D(j)$,
\item for each job $j$ and two consecutive deadlines $d<d'$ with $d\geq r_{j}+1$
it holds that $\frac{d'-r_{j}}{d-r_{j}}\le1+\varepsilon$,
\item for each pair of consecutive jobs $j,j+1$, if $d\in D(j)$ and $d\ge r_{j+1}$,
then also $d\in D(j+1)$.
\end{itemize}
\end{lemma}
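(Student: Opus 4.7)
The plan is to construct the deadline sets $D(j)$ recursively, working backward from $j = n$ down to $j = 1$.

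For the base case $D(n)$, I would take $\{r_n, r_n+1, T\}$ together with two rounded geometric sequences: a \emph{coarse} sequence $\{\delta \lceil (r_n + (1 + \varepsilon/8)^i) / \delta \rceil : i \ge 0\}$ contributing $O(\log(T)/\varepsilon)$ deadlines in $[r_n+1, T]$, and an auxiliary \emph{fine} sequence $\{\delta \lceil (r_n + \delta (1 + \varepsilon/8)^k) / \delta \rceil : k \ge 0\}$ contributing $O(\log(1/\delta)/\varepsilon) = O((m \log n)/\varepsilon)$ deadlines in $(r_n, r_n+1]$. Since $m = O(1)$ and $T$ is at least quasi-polynomial in $n$, both parts fit in the $O(\log(T)/\varepsilon)$ size bound. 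The three special points are included by construction, and the geometric property relative to $r_n$ follows because consecutive ratios are bounded by $(1 + \varepsilon/8)(1 + O(\delta)) \le 1 + \varepsilon$ after rounding to $\delta$-multiples.

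For $j < n$, given $D(j+1)$, I would define $D(j)$ as the union of (i) the explicit points $\{r_j, r_j+1\}$; (ii) the analogous coarse and fine rounded geometric sequences relative to $r_j$ restricted to $[r_j, r_{j+1})$; and (iii) a subset $S_j \subseteq D(j+1) \cap [r_{j+1}, T]$ obtained by taking, for each target position $T_i := r_j + (1 + \varepsilon/8)^i \in [r_{j+1}, T]$, the deadline of $D(j+1)$ closest to $T_i$. This yields $|S_j| = O(\log(T)/\varepsilon)$; together with the $O(\log(T)/\varepsilon)$ deadlines in parts (i) and (ii), this gives $|D(j)| = O(\log(T)/\varepsilon)$ \emph{uniformly} in $j$. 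The hierarchy property $D(j) \cap [r_{j+1}, T] \subseteq D(j+1)$ is immediate since $S_j \subseteq D(j+1)$ and parts (i) and (ii) lie below $r_{j+1}$; the explicit points $r_j, r_j+1$ are present by construction, and $T \in D(j)$ is inherited from $D(j+1)$.

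The main technical obstacle is verifying the geometric property for $D(j)$ across parts (ii) and (iii), particularly at the transition between them. The crux is that the coarse-and-fine density of $D(j+1)$ guarantees, for every target $T_i$ with $T_i \ge r_{j+1}$, that the closest deadline in $D(j+1)$ deviates from $T_i$ by at most an $O(\varepsilon)$-fraction of $T_i - r_j$: when $T_i - r_{j+1} \ge 1$ the coarse sequence of $D(j+1)$ has spacing at most $(\varepsilon/8)(T_i - r_{j+1}) \le (\varepsilon/8)(T_i - r_j)$, and when $T_i - r_{j+1} < 1$ (the delicate regime in which $r_{j+1} - r_j$ is very small) the fine sequence keeps the spacing within $(\varepsilon/8)(T_i - r_{j+1})$. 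Consequently two consecutive selected picks $P_i, P_{i+1}$ have ratio $(P_{i+1} - r_j)/(P_i - r_j) \le (1 + \varepsilon/8)(1 + O(\varepsilon)) \le 1 + \varepsilon$ after choosing constants appropriately, yielding the required geometric property.
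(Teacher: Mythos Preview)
Your backward construction is a genuinely different route from the paper. The paper goes \emph{forward}: it initializes $D(j)$ from $D(j-1)\cap[r_j,\infty)$, then repeatedly inserts the geometric mean $r_j+\sqrt{(d-r_j)(d'-r_j)}$ between any violating pair, and only at the very end rounds each non-discrete deadline to both $\lfloor\cdot\rfloor_\delta$ and $\lceil\cdot\rceil_\delta$. The size bound then follows from a single clean invariant: consecutive deadlines above $r_j+1$ always satisfy the \emph{lower} bound $(d'-r_j)/(d-r_j)\ge\sqrt{1+\varepsilon}$, so there are at most $O(\log(T)/\varepsilon)$ of them; deadlines in $(r_j,r_j+1)$ are inherited from $D(j-1)$ and bounded by a simple telescoping argument. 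No auxiliary ``fine'' sequence is needed.

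Your approach can be made to work, but as written it has a gap in the size bound. You claim the fine sequence has $O(\log(1/\delta)/\varepsilon)=O((m\log n)/\varepsilon)$ points, but $\log(1/\delta)=\log(n^m/\varepsilon)=m\log n+\log(1/\varepsilon)$, and you drop the $\log(1/\varepsilon)$ term without justification; nothing in the lemma bounds $\varepsilon$ below in terms of $T$. More importantly, the geometric fine sequence is over-engineered because you are not using that release times lie in $\mathbb{N}_0$. When $r_j\neq r_{j+1}$ we have $r_{j+1}-r_j\ge 1$, so for any target $T_i\in[r_{j+1},r_{j+1}+1]$ the spacing you actually need is $O(\varepsilon)(T_i-r_j)\ge O(\varepsilon)\cdot 1$; hence $O(1/\varepsilon)$ \emph{arithmetically} spaced points in $[r_{j+1},r_{j+1}+1]$ already suffice, and these fit in $O(\log(T)/\varepsilon)$ trivially. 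Your parenthetical ``the delicate regime in which $r_{j+1}-r_j$ is very small'' cannot occur. A second minor point: you must explicitly include $T$ in $S_j$, since ``closest pick'' does not guarantee $T$ is selected.
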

Consider a job $j\in J$ and assume that $D(j)=\{d_{0},d_{1},...,d_{k}\}$.
We define a set of intervals $\I(j)$ for $j$ by $\I(j):=\left\{ [d_{\ell},d_{\ell+1}):\ell\in\{0,...,k-1\}\right\} $.
Since $|D(j)|\le O(\log(T)/\varepsilon)$, we also have that $|\I(j)|\le O(\log(T)/\varepsilon)$.

The high level idea of our algorithm is that we consider the jobs
one by one, ordered non-decreasingly by their release times. For each
job $j$, we define for each interval in $I\in\I(j)$ for how long
$j$ is processed during $I$ on each machine. Assuming that $[d,d')\in\I(j)$
is the last interval during which we work on $j$, for the cost of
$j$ we take the value $w_{j}(d'-r_{j})^{p}$. As the processing time of each job is at least $1$ the third property
of Lemma~\ref{lem:deadlineProperties} yields that this is by at most a factor
of $(1+\epsilon)^{p}$ larger than the actual cost of $j$.

As indicated before, we assume that jobs can be started, preempted,
and resumed only at discrete times that are multiples of $\delta$
(although they may finish at non-discrete times). This means that
during any $\delta$-interval $[k\delta,(k+1)\delta]$ where $k\in\mathbb{N}$,
each machine can (partially) process only one job. Therefore, when
we consider a job $j$, there are only $(T/\delta)^{O(2^{m}\log(T)/\varepsilon)}$
possibilities for the number of free $\delta$-intervals on the $m$
machines during each interval in $\I(j)$ and for how this free time
is distributed among the $m$ machines. More precisely, this bounds
the number of possibilities for how many $\delta$-intervals each
exact set of machines is still free during each interval $I\in\I(j)$.
 Also, there are only $(T/\delta)^{O(2^{m}\log(T)/\varepsilon)}$
possibilities for defining for how many $\delta$-intervals we want
to work on $j$ during each interval $I\in\I(j)$, on which machine(s),
and which other machines are still free while we work on $j$. Since
the sets of deadlines $D(j)$ of the jobs form a hierarchical structure,
we can embed to above procedure into a dynamic program (DP) such that
for each of the $(T/\delta)^{O(2^{m}\log(T)/\varepsilon)}$ possible
decisions for scheduling $j$, there is a DP-cell for the resulting
subproblem for the next job $j+1$.

Formally, we define a DP-cell~$(j,L)$ for each combination of a
(next) job $j\in\{1,2,\dotsc,n\}$ and a load vector $L$, having
one entry $L(S,I)$ for each set of machines $S\subseteq M$ and each
$I\in\mathcal{I}(j)$. The entries of $L$ take values in $\{0,1,\dotsc,T/\delta\}$
and satisfy that $\sum_{S\subseteq M}L(S,I)\le|I|/\delta$ for every
$I\in\mathcal{I}(j)$ (we indicate by $|I|$ the length of $I$).
A DP-cell $(j,L)$ corresponds to the subproblem in which we want
to find a feasible allocation of $\delta$-intervals for all jobs
$\{j,j+1,\dotsc,n\}$, i.e., such that all these jobs are completely
processed. We impose the restriction that for each interval $I\in\mathcal{I}(j)$
and each machine set $S\subseteq M$, in $L(S,I)$ many $\delta$-intervals
during $I$ exactly the machines in $S$ are allocated to some jobs
from $\{j,\dotsc,n\}$. Intuitively, during these $\delta$-intervals,
each machine in $M\setminus S$ schedules a job in $\{1,...,j-1\}$
or is idle. The objective of the subproblem for the DP-cell $(j,L)$
is to minimize the resulting cost of the jobs in $\{j,...,n\}$.

We compute the solution for a DP-cell $(j,L)$ as follows: for each
$I\in\mathcal{I}(j)$, each set $S\subseteq M$, and each $i\in S$
we guess for how many $\delta$-intervals during $I$ the job $j$
is processed on $i$ and each machine in $S\setminus\{i\}$ processes
a job in $\{j+1,j+2,\dotsc,n\}$. Formally, we enumerate all guesses
and ultimately take the solution with lowest cost among all guesses.
For each combination of an interval $I\in\mathcal{I}(j)$, a set $S\subseteq M$,
and a machine $i\in S$ let $y(I,S,i)$ denote the guessed value.
We require that $L(I,S)\ge\sum_{i\in S}y(I,S,i)$ for each $I\in\mathcal{I}(j)$
and each set $S\subseteq M$ which means that the loads given by $L$
suffice. Note that we do not necessarily have equality, since there
might be $\delta$-intervals where no machine is allocated to $j$.
If job migrations are not allowed, we enumerate only guesses for which
$j$ is processed on only one machine. Further, we check that $j$
can be completed in its $\delta$-intervals, formally,
\begin{equation}
\sum_{i\in M}\sum_{I\in \I(j)}\sum_{S\subseteq M,i\in S}\delta\cdot y(I,S,i)/p_{ij}\ge1.\label{eq:dp-complete}
\end{equation}
From the guesses we can easily derive a residual load vector $L'$
that corresponds to the $\delta$-intervals allocated only to $\{j+1,j+2,\dotsc,n\}$,
more precisely,
\begin{equation}
L'(I,S):=L(I,S)-\sum_{i\in S}y(I,S,i)+\sum_{i\notin S}y(I,S\cup\{i\},i)\,\,\,\,\,\,\,\,\forall I\in\mathcal{I}(j)\,\,\forall S\subseteq M.\label{eq:dp-residual}
\end{equation}
Note that possibly the load vector $L'$ does not correspond to a
DP-cell for the next job $j+1$, since the intervals $\mathcal{I}(j)$
may differ from $\mathcal{I}(j+1)$. However, the latter intervals
form a subdivision of the former. Hence, it is straight-forward to
check for every DP-cell $(j+1,L'')$, whether $L''$ is ``compatible''
with $L'$. Formally, for each interval $I\in\mathcal{I}(j)$, let
$I_{1},\dotsc,I_{k}\in\mathcal{I}(j+1)$ be exactly the intervals
in $\mathcal{I}(j+1)$ contained in $I$, and we require for each
$S\subseteq M$ and each $i\in S$ that
\begin{equation}
L''(I_{1},S,i)+\cdots+L''(I_{k},S,i)=L'(I,S,i).\label{eq:dp-subdivision}
\end{equation}
We identify the DP-cell $(j+1,L'')$ with the solution of lowest cost
among all cells for the job $j+1$ for which the load vector $L''$
is compatible with $L'$. We extend it to a solution to the cell $(j,L)$
by assigning $\delta$-intervals to job $j$ according to the guessed
values $y(I,S,i)$, i.e., for each interval $I\in\I(j)$, each set
of machines $S\subseteq M$, and each $i\in S$, we take $y(I,S,i)$
many $\delta$-intervals during $I$ in which the machines $S\setminus\{i\}$
are busy with jobs in $\{j+1,...,n\}$ and schedule $j$ on $i$ during
these $\delta$-intervals.

In case that $j=n$, the job $j+1$ is not defined. In this case,
we allow only guesses for which the residual load vector $L'$ is
the all-zero vector. Then, instead of taking a precomputed schedule
from a DP-cell $(j+1,L'')$ like above, we take the schedule in which
no machine is working on any job.

At the end, we output a solution with lowest cost among all solutions
stored in the DP-cells of the form $(1,L)$ for some load vector $L$.

In the following lemma, we prove by induction over the jobs that our DP
computes the optimal solution to each DP-cell $(j,L)$.

\begin{lemma}\label{lem:dp} The dynamic program above computes the
optimal cost schedule where jobs can only be started, preempted, or
resumed at integral multiples of $\delta$ and the cost of a job $j$
is $w_{j}\cdot(d-r_{j})^{p}$ for the smallest $d\in D(j)$ with $d\ge C_{j}$.
\end{lemma}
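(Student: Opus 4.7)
}
The plan is to prove by reverse induction on $j$ (from $j=n$ down to $j=1$) the following strengthening: for every DP-cell $(j,L)$, the value stored by the algorithm equals the minimum, over all schedules of the jobs $\{j,j+1,\dotsc,n\}$ that start/preempt/resume only at integer multiples of $\delta$ and for which, for every $I\in\mathcal{I}(j)$ and $S\subseteq M$, exactly $L(S,I)$ many $\delta$-intervals in $I$ have precisely the machines of $S$ allocated to jobs in $\{j,\dotsc,n\}$, of the sum $\sum_{k\ge j} w_k(d-r_k)^p$ where $d$ is the smallest element of $D(k)$ with $d\ge C_k$. Applying this with $j=1$ to the load vector $L^\star$ having $L^\star(M,I)=|I|/\delta$ and $L^\star(S,I)=0$ for $S\neq M$ (the cell that corresponds to ``all machines fully available for all jobs'') immediately gives the lemma.

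For the base case $j=n$, only job $n$ remains. Any feasible schedule is determined by, for every $I,S,i$, how many $\delta$-intervals within $I$ have machine set $S$ allocated, machine $i$ running $n$, and the other machines of $S\setminus\{i\}$ idle; these counts are precisely the guess variables $y(I,S,i)$. Condition \eqref{eq:dp-complete} is exactly feasibility of completing $n$, and requiring the residual $L'$ of \eqref{eq:dp-residual} to vanish matches the fact that no further jobs are to be scheduled. The cost attributed to $n$ by the DP, $w_n(d-r_n)^p$ with $d\in D(n)$ the smallest deadline $\ge C_n$, matches the actual cost because $d$ is determined by the last interval $I=[d',d)\in\mathcal{I}(n)$ in which some $y(I,S,i)$ is nonzero. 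Hence minimizing over all admissible guesses $y$ yields the optimum.

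For the inductive step, fix $(j,L)$ and assume the claim for every cell $(j+1,L'')$. In one direction, any valid guess $y$ combined with an optimal schedule of the cell $(j+1,L'')$ for an $L''$ compatible with the residual $L'$ from~\eqref{eq:dp-residual} via~\eqref{eq:dp-subdivision} lifts to a feasible schedule of $(j,L)$: the subdivision relation~\eqref{eq:dp-subdivision} just refines intervals of $\mathcal{I}(j)$ into those of $\mathcal{I}(j+1)$ without altering which machines are busy in each $\delta$-interval, and~\eqref{eq:dp-residual} records that a $\delta$-interval previously counted as $S\cup\{i\}$ (for $\{j,\dotsc,n\}$) becomes counted as $S$ (for $\{j+1,\dotsc,n\}$) exactly when $j$ takes over machine $i$ there. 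Conversely, given any feasible schedule $\sigma$ for $(j,L)$, read off $y(I,S,i)$ as the number of $\delta$-intervals of $I$ in which machines $S\setminus\{i\}$ run some job in $\{j+1,\dotsc,n\}$ while $i$ runs $j$; this satisfies the load cap, the completion inequality~\eqref{eq:dp-complete}, and (if migrations are disallowed) the single-machine restriction. Restricting $\sigma$ to jobs $\{j+1,\dotsc,n\}$ yields a feasible schedule for some cell $(j+1,L'')$ with $L''$ compatible with the corresponding $L'$; by the inductive hypothesis the DP's stored value for $(j+1,L'')$ is at most this restricted cost, and the attributed cost of $j$ under $y$ agrees with the cost of $j$ in $\sigma$ because both equal $w_j(d-r_j)^p$ for the same interval endpoint $d$. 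Summing the two sides then shows both directions, completing the induction.

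The main obstacle is purely bookkeeping: one must verify that the algebraic update~\eqref{eq:dp-residual} really matches the physical transition from ``who is still available for jobs $\ge j$'' to ``who is still available for jobs $\ge j+1$'' \emph{simultaneously} with the refinement~\eqref{eq:dp-subdivision} from $\mathcal{I}(j)$ to $\mathcal{I}(j+1)$. The crucial observation is that a $\delta$-interval's allocated machine set changes from $S\cup\{i\}$ to $S$ precisely at those $y(I,S\cup\{i\},i)$ slots where $j$ is assigned to $i$, while subdivision of intervals in $\mathcal{I}(j)$ into sub-intervals of $\mathcal{I}(j+1)$ does not alter the label of any individual $\delta$-interval. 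Once this correspondence is spelled out, both implications reduce to a direct rewriting, and the induction closes.
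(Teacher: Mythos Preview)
Your induction scheme is precisely the paper's approach: reverse induction on $j$, showing that the value of cell $(j,L)$ equals the minimum cost over all $\delta$-discrete schedules for $\{j,\dotsc,n\}$ that realize the exact machine-set profile $L$. Your two-direction argument for the inductive step is correct and in fact slightly more explicit than the paper's, which proves only the ``$\le \OPT(j,L)$'' direction in detail and calls the other direction straight-forward.

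There is, however, a genuine slip in how you close the argument. The load entry $L(S,I)$ counts the $\delta$-intervals of $I$ in which \emph{exactly} the machines in $S$ are running some job from $\{j,\dotsc,n\}$; machines outside $S$ are idle or (for $j>1$) run earlier jobs. Your choice $L^\star(M,I)=|I|/\delta$, $L^\star(S,I)=0$ for $S\neq M$, therefore does \emph{not} encode ``all machines fully available'': it forces every machine to be busy with a job in every $\delta$-interval, which is in general infeasible (e.g.\ a single short job on two machines). The cell $(1,L^\star)$ will then be infeasible, and your final step collapses. The correct conclusion---and what the paper does---is to observe that the algorithm outputs the minimum over \emph{all} cells $(1,L)$; since the optimal schedule realizes some particular load vector $L^{*}$, and by your induction the cell $(1,L^{*})$ stores exactly its cost, the output is at most $\OPT$, while the converse follows because every cell stores a feasible schedule. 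A related confusion appears in your base case, where you describe $\delta$-intervals with ``machine set $S$ allocated'' yet ``machines in $S\setminus\{i\}$ idle''; with only job $n$ left, the allocated set can only be a singleton or empty, so the relevant guesses are $y(I,\{i\},i)$ and the residual $L'$ must vanish---exactly as the paper records.
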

Our running time is bounded by $O(n\log n)$ for sorting the jobs
initially by their release times, and by the number of DP-cells and
the number of guesses for them. The latter can be bounded by $n\cdot(T/\delta)^{O(2^{m}m\log(T)/\varepsilon)}$.

\begin{lemma}\label{lem:running-time}
The running time of our algorithm is bounded by $O(n\log n)+n\cdot(T/\delta)^{O(2^{m}m\log(T)/\varepsilon)}$.
\end{lemma}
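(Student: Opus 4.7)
The plan is to bound the running time by counting three quantities and multiplying them: the number of DP-cells, the number of guesses enumerated per DP-cell, and the work performed per guess. Once these are in hand, the claimed bound follows by adding the $O(n\log n)$ cost of the initial sort of the jobs by release time.

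First, I will bound the number of DP-cells. A cell is indexed by a job $j\in\{1,\dotsc,n\}$ together with a load vector $L$ whose coordinates are the pairs $(S,I)$ with $S\subseteq M$ and $I\in\mathcal{I}(j)$. By Lemma~\ref{lem:deadlineProperties} we have $|\mathcal{I}(j)|\le O(\log(T)/\varepsilon)$, hence the number of coordinates is $2^m\cdot O(\log(T)/\varepsilon)$, each taking an integer value in $\{0,1,\dotsc,T/\delta\}$. This yields at most $(T/\delta)^{O(2^m\log(T)/\varepsilon)}$ load vectors per job, and $n\cdot(T/\delta)^{O(2^m\log(T)/\varepsilon)}$ DP-cells in total.

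Second, for a fixed cell $(j,L)$ the algorithm enumerates guesses $y(I,S,i)$ indexed by triples with $I\in\mathcal{I}(j)$, $S\subseteq M$, and $i\in S$. There are at most $m\cdot 2^m\cdot O(\log(T)/\varepsilon)$ such triples, and each $y(I,S,i)$ lies in $\{0,\dotsc,T/\delta\}$, so the number of guesses per cell is at most $(T/\delta)^{O(2^m m\log(T)/\varepsilon)}$. This is precisely the exponent appearing in the stated bound.

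Third, I will argue that each guess can be processed within a factor that does not increase the leading exponent. For a given guess the algorithm checks the feasibility conditions (the load bound $L(I,S)\ge\sum_{i\in S}y(I,S,i)$, the no-migration constraint when applicable, and \eqref{eq:dp-complete}), forms the residual load vector $L'$ via \eqref{eq:dp-residual}, and then retrieves the cheapest compatible DP-cell $(j+1,L'')$ satisfying \eqref{eq:dp-subdivision}. The only step that requires slight care, and the only place I anticipate an obstacle, is this last retrieval, since naively iterating over all compatible $L''$ could multiply the running time. I would avoid this by precomputing, once per job $j$, an auxiliary table indexed by residual vectors $L'$ at the coarser $\mathcal{I}(j)$-resolution, whose entry stores a pointer to the best DP-cell among all $(j+1,L'')$ compatible with $L'$. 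This table is built by sweeping once over the DP-cells $(j+1,L'')$, contracting each $L''$ to its $\mathcal{I}(j)$-refinement via \eqref{eq:dp-subdivision}, and updating the entry. The precomputation costs $n\cdot(T/\delta)^{O(2^m\log(T)/\varepsilon)}$ overall, and afterwards each guess is processed in time polynomial in $m$ and $\log(T/\delta)/\varepsilon$. Multiplying the bounds from the three steps and adding the initial sort then yields the statement; the rest is routine exponent bookkeeping.
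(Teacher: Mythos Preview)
Your proposal is correct and follows essentially the same counting argument as the paper: bound the number of DP-cells by $n\cdot(T/\delta)^{O(2^{m}\log(T)/\varepsilon)}$ and the number of guesses per cell by $(T/\delta)^{O(2^{m}m\log(T)/\varepsilon)}$, then multiply and add the sorting cost. The only difference is that you explicitly treat the search over compatible $L''$ and propose a precomputed table, whereas the paper simply states the running time is ``dominated by the number of DP-cells and guesses''; your extra care is not strictly necessary, since even the naive enumeration of all $(T/\delta)^{O(2^{m}\log(T)/\varepsilon)}$ candidate $L''$ per guess is absorbed into the $O(2^{m}m\log(T)/\varepsilon)$ exponent.
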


\begin{theorem}\label{thm:QPTAS} For every $\varepsilon>0$ and every $p>0$ there
is a $(1+\varepsilon)$-approximation algorithm with a running time
of $O(n\log n)+(n\cdot p_{\max}/\varepsilon)^{O(2^{m}m\log(n\cdot p_{\max})/\varepsilon)}$
for minimizing $(\sum_{j}w_{j}(C_{j}-r_{j})^{p})^{1/p}$ on $m$ unrelated
machines in the settings with and without migration. \end{theorem}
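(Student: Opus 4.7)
The plan is to combine Lemmas~\ref{lem:chooseDelta}, \ref{lem:deadlineProperties}, \ref{lem:dp}, and~\ref{lem:running-time}, absorbing two independent losses of $(1+\varepsilon')$ into the final approximation factor by running the whole algorithm with $\varepsilon' := \varepsilon/3$, so that $(1+\varepsilon')^2 \le 1+\varepsilon$.

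The first loss comes from the time discretization: Lemma~\ref{lem:chooseDelta} directly provides a schedule obeying the $\delta$-grid whose $p$-norm cost is at most $(1+\varepsilon')\cdot \OPT$. The second loss comes from charging each job $j$ the rounded cost $w_j(d-r_j)^p$ for the smallest $d \in D(j)$ with $d \ge C_j$ instead of $w_j(C_j-r_j)^p$. The key calculation I would carry out is this: since processing times are positive integers we have $C_j - r_j \ge 1$, so the predecessor $d_{\mathrm{prev}}$ of $d$ in $D(j)$ satisfies $r_j + 1 \le d_{\mathrm{prev}} \le C_j$ (using $r_j + 1 \in D(j)$ from Lemma~\ref{lem:deadlineProperties}), and the third property of that lemma yields $d - r_j \le (1+\varepsilon')(d_{\mathrm{prev}} - r_j) \le (1+\varepsilon')(C_j - r_j)$. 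Raising to the $p$-th power, summing over jobs, and taking the $p$-th root then shows that the rounded $p$-norm of any schedule is at most $(1+\varepsilon')$ times its true $p$-norm.

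Applying both observations in sequence, the $\delta$-discretized near-optimum has rounded $p$-norm at most $(1+\varepsilon')^2 \cdot \OPT \le (1+\varepsilon)\cdot \OPT$. Lemma~\ref{lem:dp} then guarantees that the DP returns a $\delta$-discretized schedule whose rounded $p$-norm is no larger than this, and since the rounded cost dominates the true cost pointwise per job, the output's true $p$-norm is likewise at most $(1+\varepsilon)\cdot \OPT$. The setting without migration is handled uniformly by the algorithm's built-in option to restrict enumerated guesses so that each job is scheduled on a single machine; all preceding arguments remain intact.

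The running time bound follows directly from Lemma~\ref{lem:running-time} upon substituting $T \le \max_j r_j + n p_{\max}$ and $1/\delta = n^m/\varepsilon'$, and observing that $\log(T)/\varepsilon' = O(\log(n\cdot p_{\max})/\varepsilon)$. I do not foresee a significant obstacle; the only slightly delicate point is the inequality $C_j - r_j \ge 1$ used in the rounded-cost analysis, which relies on the integrality of processing times and is precisely what makes $r_j + 1 \in D(j)$ the right reference deadline guaranteed by Lemma~\ref{lem:deadlineProperties}.
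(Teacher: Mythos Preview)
Your argument is essentially identical to the paper's: both combine Lemma~\ref{lem:chooseDelta} (discretization loss), Lemma~\ref{lem:deadlineProperties} (deadline-rounding loss via $C_j-r_j\ge 1$ and the third property), Lemma~\ref{lem:dp} (DP optimality for the rounded objective), and Lemma~\ref{lem:running-time}, then absorb the two $(1+\varepsilon')$ factors by rescaling.

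There is one small gap in your running-time derivation. From $T=\max_j r_j+np_{\max}$ alone you cannot conclude $\log T=O(\log(n\cdot p_{\max}))$, because nothing you have written bounds $\max_j r_j$ in terms of $n$ and $p_{\max}$. The paper closes this by the standard observation that one may assume $r_j\le n\cdot p_{\max}$ for all $j$: if not, the time axis contains an interval of length exceeding $p_{\max}$ with no release date, so any reasonable schedule is idle there on every machine and the instance decomposes into two independent parts that can be solved separately. With that reduction $T=O(n\cdot p_{\max})$ and your substitution goes through.
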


\bibliographystyle{plain}
\bibliography{references}

\appendix

\section{Omitted proofs in Section~\ref{sec:algorithm}}
\subsection{Proof of Lemma~\ref{lem:Lawler}}
%
For the Lawler-Moore algorithm we refer to~\cite{lawler1969functional}.
The setting there is slightly different, hence we explain why it is equivalent.
In the algorithm, there is a common release date of $0$ and different
deadlines $\widehat{d}_j$, but we can rephrase our problem to bring it in
that form. For $j\in\widehat{J}$ let $\widehat{d}_j=\widehat{d}-\widehat{r}_j$. There is
the following one-to-one correspondence between the schedules with 
release date $0$ and deadlines  $\widehat d_{j}$ and the schedules with release dates
$\widehat{r}_j$ and common deadline $\widehat d$: when a job is executed in the time slot
$[s,s+1]$ with $s\in\mathbb{Z}$ in the first schedule, it is executed
in the interval $[\widehat d-(s+1),\widehat d-s]$ in the second schedule.
Thus, we can use the Lawler-Moore
algorithm with the deadlines dates $\widehat{d}_j$ and use the transformation above to get the desired schedule. 

\subsection{Proof of Lemma~\ref{lem:correctnessDP}}
%
We prove this via induction on $t-s$.
First consider the base cases where $t-s=1$. Then, each job
$j\in J(s,t)$ gets deadline $d_{j}=\infty$. Thus, there is nothing to show.

Now suppose that $t-s>1$. Fix $\rightb$ to be
the value of our guess that corresponds to the solution stored in $(s,t,b)$.
The algorithm uses the solutions in the DP-cells $(s,a,\rightb)$ and
$(a,t,\rightb)$. First, we prove the existence of these cells. 
Since $[s,t)\in\mathcal I$ and $a = (s + t)/2$ we have that $[s,a)\in\mathcal I$ and $[a,t)\in\mathcal I$.
By definition we have $\rightb \ge \max\{b,s-(t-s)\}$ and $\rightb\le s \le a$.
The earliest starting times of both $[s,a)$ and $[a,t)$ are $\max\{0,s - (t-s)\}$, which is therefore lower or equal $\rightb$. Thus $\rightb$ is in the eligible ranges and we conclude that both DP-cells exist.

Notice that the way we apply the Lawler-Moore algorithm, we
ensure that there is schedule for the jobs in
$\leftjobs$ with $d_{j} = s$ using only the time interval $[b,\rightb]$.
Thus it remains to show that there is a feasible schedule for the
jobs in $\rightjobs$ with deadlines $d_{j}$ using
only the interval $[\rightb,\infty)$.
We will use Lemma~\ref{lem:EDF} for this.
Since the lemma does not consider the restriction to an interval, i.e., to $[\rightb,\infty)$, we define modified release times
 $r_{j}'=\max\{r_{j}, \rightb\}$. By the lemma
we now only need to check that for each interval $[s',t']$
it holds that $\sum_{j\in \rightjobs: s'\leq r_{j}'\leq d_{j}\leq t'}p_{j}\leq t'-s'$. Let $d^{(1)}_j$ be the deadlines from the solution of
DP-cell $(s, a, \rightb)$ and $d^{(2)}_j$ those from $(a, t, \rightb)$.

First consider the case that $t'<a$. Since by induction hypothesis
there is a schedule with
release dates $r_{j}'$ and deadlines $d^{(1)}_j$, Lemma~\ref{lem:EDF}
implies $\sum_{j\in \rightjobs:s'\leq r_{j}'\leq d_{j}^{(1)}\leq t'}p_{j}\leq t'-s'$.
Each job $j$ with $d_{j}\leq t' < a$ fulfills $d_{j}\geq d_{j}^{(1)}$.
We conclude that
\begin{equation*}
\sum_{j\in \rightjobs:s'\leq r_{j}\leq d_{j}\leq t'}p_{j}\leq\sum_{j\in \rightjobs:s'\leq r_{j}\leq d_{j}^{(1)}\leq t'}p_{j}\leq t'-s'.
\end{equation*}
Similarly, when $t'\geq a$ each job with $d_{j}\leq t'$ fulfills
also $d_{j}^{(2)}\leq t'$. Together with Lemma~\ref{lem:EDF} applied
to the schedule with deadlines $d^{(2)}$ this proves $\sum_{j\in \rightjobs:s'\leq r_{j}\leq d_{j}\leq t'}p_{j}\leq t'-s'$

\subsection{Proof of Lemma~\ref{lem:makespan}}
%
Recall that the schedule is produced by EDF. In particular, the machine
is never idle when there is a job that has been released, but not completed.
Thus, after time $\max_{j\in J} r_j$ the machine is never idle until
all jobs are completed. On the other hand it cannot be busy for longer
than $\sum_{j\in J} p_j$. Since $T \ge \max_{j\in J} r_j + \sum_{j\in J} p_j$ the statement of the lemma follows.

\subsection{Proof of Lemma~\ref{lem:recLawler}}
%
We may assume without loss of generality that $b$ is the total time
during $[0, s]$ where $\OPT$ does not process jobs in $J(s,t)$, since
this is the maximum value of $b$ for good DP-cell.
We show the existence of such a schedule using the third part of Lemma~\ref{lem:EDF} with release times $r'_j = \max\{b, r_j\}$ and
deadlines $d'_j = \rightb$ for all $j\in \leftjobs$ with $d^{\OPT}_j \le s$.
Consider an interval $[s',t']$ with $b\leq s'\leq t'\leq \rightb$.
Let $J' = \{j\in \leftjobs : d^{\OPT}_j \le s, s' \le r'_j, d'_j \le t'\}$.
If $J' = \emptyset$, the inequality
$\sum_{j\in J'} p_{j}\leq t'-s'$ is
trivially fulfilled. As all deadlines are $\rightb$, we can
assume $t'=\rightb$. Furthermore, because $\leftjobs$ contains only jobs released before $s-(t-s)$ we may assume $s'\leq s-(t-s)$.

First consider the case that $s'=b$. The values of $b$ and
$\rightb$ are chosen in such a way, that $\rightb-b$
is the amount of time $\OPT$ works during $[0,s]$ on jobs in $\leftjobs$.
As $\OPT$ completes every job in $J'$ within that
interval, we obtain $\sum_{j\in J'}p_{j}\leq\rightb - b = t' - s'$.

Now assume that $s'>b$. Then $r'_j = r_j$ for every job $j\in J'$.
Notice that $\OPT$ processes jobs from $\rightjobs$ for a duration of $s  - \rightb$ during $[s - (t - s), s]$,
in particular, during $[s', s]$.
Hence,
\begin{equation*}
\sum_{j\in J'}p_{j} \le \hspace{-1em} \sum_{j\in \leftjobs : s'\leq r_{j}, d^{\OPT}_j \le s} \hspace{-1em} p_{j} \le s - s' - (s - \rightb) = t' - s'. \qedhere
\end{equation*}

\subsection{Proof of Lemma~\ref{lem:approx}}
%
We prove the lemma by induction on the length of the interval
$[s,t)$. Let $(s,t,b)$ be a DP-cell and $a=(s+t)/2$.
Recall that in this case the algorithm
minimizes over all values of $\rightb$. Thus, we can
bound the cost of the schedule computed by the algorithm by the cost
of the schedule arising from specific values for $\rightb$.
Hence, assume without loss of generality that $\rightb$
is the total amount of time during $[0, s]$ where $\OPT$
does not process jobs in $\rightjobs$.
It is clear that $\rightb \le s$.
Furthermore, $\rightb \ge b$, since the DP-cell is a good cell and $\rightb \ge s - (t - s)$, since
no jobs from $\rightb$ are released (in particular, cannot be processed) before $s - (t - s)$.
Thus, $\rightb$ is
among the values tried by the algorithm.
We will now show the bound in the statement of the lemma separately for jobs in $\leftjobs$ and $\rightjobs$.

Consider first $\leftjobs$ and recall
that we use Lawler-Moore (Lemma~\ref{lem:Lawler}) to compute a schedule for those jobs.
Due to Lemma~\ref{lem:recLawler} we have
\begin{equation*}
\sum_{j\in \leftjobs}\cost_{j}(d_{j})\leq\sum_{j\in \leftjobs,d_{j}^{\OPT}>s} \hspace{-1em}\cost_{j}(t).
\end{equation*}
We will now bound the value of $\cost_{j}(t)$ for each job in the sum.
Let $j\in \leftjobs$ with $d_{j}^{\OPT}>s$. If $d_{j}^{\OPT}\leq t$
then $r_{j}\leq s-(t-s) < d^{\OPT}_j - (t - s)$ and
\begin{equation*}
t-r_{j}=t-s+s-r_{j}\leq t-s+d_{j}^{\OPT}-r_{j}\leq2(d_{j}^{\OPT}-r_{j}).
\end{equation*}
This implies $\cost_{j}(t)\leq2w_{j}(d_{j}^{\OPT}-r_{j})$. On the
other hand, if $d_{j}^{\OPT}>t$ then $\len([r_{j},d_{j}^{\OPT}]\cap[s,t])=t-s$.
Since $r_{j}\geq s-3(t-s)$ it follows $t-r_{j}\leq4(t-s)$. This
implies $\cost_{j}(t)\leq 4\cdot\len([r_{j},d_{j}^{\OPT}]\cap[s,t])$. Thus, 
\begin{align*}
\sum_{j\in \leftjobs}\cost_{j}(d_{j}) & \leq\sum_{\substack{j\in\leftjobs\\
s<d_{j}^{\OPT}\le t
}
}\hspace{-1em}\cost_{j}(t)+\sum_{j\in\leftjobs,d_{j}^{\OPT}>t} \hspace{-1em}\cost_{j}(t)\\
 & \leq2\hspace{-1em}\sum_{\substack{j\in \leftjobs\\
s<d_{j}^{\OPT}\le t
}
}\hspace{-1em}w_{j}(d_{j}^{\OPT}-r_{j})+4 \hspace{-1em}\sum_{j\in \leftjobs,d_{j}^{\OPT}>t}\hspace{-1em}\len([r_{j},d_{j}^{\OPT}]\cap[s,t])\\
 & \leq2\hspace{-1em}\sum_{\substack{j\in \leftjobs \\
s<d_{j}^{\OPT}\le t
}
}\hspace{-1em}w_{j}(d_{j}^{\OPT}-r_{j})+4\sum_{j\in \leftjobs}\len([r_{j},d_{j}^{\OPT}]\cap[s,t])
\end{align*}
It remains to bound the cost of the jobs in $\rightjobs$.
Consider a job $j\in \rightjobs$, i.e., $s - (t -s) < r_{j} < t$. 
If $t - s = 1$ then $j$ is released at $s$ and therefore $d^{\OPT}_j \ge t$.
Moreover, by construction $d_j = \infty$.
It follows that
\begin{equation*}
\sum_{j\in \rightjobs}\cost_{j}(d_{j})= \sum_{j\in \rightjobs} w_j (t - r_j) \le  2\hspace{-1em}\sum_{\substack{j\in \rightjobs\\
s<d_{j}^{\OPT}\le t
}
}w_{j}(d_{j}^{\OPT}-r_{j})+4\sum_{j\in \rightjobs}w_{j}\cdot\len([r_{j},d_{j}^{\OPT}]\cap[s,t]) .
\end{equation*}
Hence, assume for the remainder of the proof that $t - s > 1$.
In this case the solution is derived from the two DP lookups of cells
$(s, a, \rightb)$ and $(a, t, \rightb)$.
The first cell is good because $J(s,a) \subseteq \rightjobs$ and therefore
the total time during $[0, s]$ where the $\OPT$ does not process jobs from $J(s, a)$ is at
least $\rightb$. Similarly, the second cell is good because $J(a,t) = \rightjobs$ and the
time during $[0, a]$ where $\OPT$ does not process jobs from $J(a, t)$ is at least $\rightb$.

Let $d^{(1)}_j$ be the deadlines from the former cell and $d^{(2)}_j$
from the latter. Likewise, let $\cost_j^{(1)}$ and $\cost_j^{(2)}$
be the respective cost functions.
Consider now a job $j\in\rightjobs$. If
$d_{j}=d_{j}^{(2)}>a$, then $\cost_{j}^{(2)}(d_{j}^{(2)})=\cost_{j}(d_{j})$.
And if $d_{j}^{(2)}\leq a$, then $\cost_{j}^{(1)}(d_{j}^{(1)})=\cost_{j}(d_{j})$.
Thus, in both cases it holds $\cost_{j}(d_{j})\leq\cost_{j}^{(1)}(d_{j}^{(1)})+\cost_{j}^{(2)}(d_{j}^{(2)})$.
Together with the induction hypothesis we obtain: 
\begin{align*}
\sum_{j\in \rightjobs}\cost_{j}(d_{j})&= \sum_{j\in \rightjobs}\cost_{j}^{(1)}(d_{j}^{(1)})+\sum_{j\in \rightjobs}\cost_{j}^{(2)}(d_{j}^{(2)})\\
&=  2\hspace{-1em}\sum_{\substack{j\in \rightjobs\\ s<d_{j}^{\OPT}\le a}} \hspace{-1em} w_{j}(d_{j}^{\OPT}-r_{j})+4\sum_{j\in \rightjobs}w_{j}\cdot\len([r_{j},d_{j}^{\OPT}]\cap[s,a])\\
 &\quad +2\hspace{-1em}\sum_{\substack{j\in \rightjobs\\
a<d_{j}^{\OPT}\le t}} \hspace{-1em} w_{j}(d_{j}^{\OPT}-r_{j})+4\sum_{j\in \rightjobs}w_{j}\cdot\len([r_{j},d_{j}^{\OPT}]\cap[a,t])\\
&=  2\hspace{-1em}\sum_{\substack{j\in \rightjobs\\
s<d_{j}^{\OPT}\le t}}\hspace{-1em} w_{j}(d_{j}^{\OPT}-r_{j})+4\sum_{j\in \rightjobs}w_{j}\cdot\len([r_{j},d_{j}^{\OPT}]\cap[s,t])
\end{align*}
Adding together the inequalities for the costs of $\leftjobs$ and $\rightjobs$ finishes the proof.

\section{Polynomial time approximation for $p$-norms of weighted flow time\label{sec:polytime}}
The goal of this section is to  generalize the previous pseudopolynomial algorithm to objective functions $(\sum_j w_j(d_j-r_j)^p)^{1/p}$ with $0 < p < \infty$ and to turn it into a polynomial time algorithm for any fixed $p$. Throughout the algorithm's description and analysis we consider $\sum_j w_j(d_j-r_j)^p$ as the objective function, but then take the $p$-th root of the approximation factor in the end.
One obstacle for the running time is that so far there are $\Omega(T)$ DP-cells. Recall
that we have one DP-cell $(s,t,b)\in\mathbb{N}_{0}^{3}$
with $0\leq b\leq s<t\leq T$ for each $[s,t)\in\I$ and each
$b\ge b_{0}$ (where $b_{0}$ is defined based on $s$ and $t$).
First, we argue that we can reduce the number of intervals in $\I$
to $(n\log T)^{O(1)}$, using the fact that the intervals in $\I$ form a binary
tree and some simplifications. However, it is not clear how to round or restrict the starting
times $b$ to a polynomial number of options. To this end, we
use a method that is used in FPTASs for for the knapsack problem.
We round the costs of the jobs (depending on their flow times) such
that there is a polynomial number of options for the total cost. Then,
for each combination of an interval $[s,t)$ and a value for the total
cost, we compute a maximal value for the starting time $b$.

In the following we explain the changes in the algorithm and its analysis.
Let $\LB:=\sum_{j}w_{j}p_{j}^p$ be a lower bound on the optimal cost
(since each job has a flowtime of at least $p_{j}$). Let $\varepsilon>0$.
For any $x\geq0$ let $\lfloor x\rfloor_{\varepsilon/n\cdot\LB}$
denote the largest integer multiple of $\varepsilon/n\cdot\LB$ that
is at most $x$. 

\subsection{Polynomial time algorithm}

We introduce a DP-cell $(s,t,B)$ for each combination of an interval
$[s,t)\in\mathcal{I}$ for which a job is released during $[s-7(t-s),t+(t-s))$
and a budget $B$ which is a integer multiple of $\frac{\varepsilon}{n}\LB$
and which satisfies $0\leq B\leq(2^p+\frac{4^p}{4^p-3^p})n^p \cdot \LB$. 

The earliest start time $b_{0}$ and the job set $J(s,t)$ are defined
as before. The corresponding subproblem is to determine a starting
time $b$ with $b_{0}\leq b\leq s$ and deadlines for the
jobs in $J(s,t)$, as in Section~\ref{sec:algorithm}. We define 
\[
\cost_{j}(d_{j})=\begin{cases}
	0 & \text{ if }d_{j}=s\\
	\lfloor w_{j}\cdot(\min\{d_{j},t\}-r_{j})^p\rfloor_{\varepsilon/n\cdot\LB} & \text{ if }d_{j}>s\ .
\end{cases}
\]
Furthermore, we require that $\sum_{j\in J(s,t)}\cost_{j}(d_{j})\leq B$. We want to
compute a solution with a maximal starting time $b$ with this
property. It might happen that the a DP-cell is infeasible or that
the algorithm does not find a solution, e.g., because the budget $B$
is too small. We call such a DP-cell \emph{infeasible}. The DP-cells
where the algorithm computes a starting time are called \emph{feasible}.

After we computed solutions to all DP-cells, we determine the smallest budget $B$ for which the DP-cell $(0,T,B)$
is feasible. We consider the deadlines for the jobs according to $(0,T,B)$
and compute a schedule by running EDF with these deadlines. As before,
each job $j$ with $d_{j}=\infty$ is completed by time $T$.

The algorithm follows the same bottom-up structure as before. We have an additional base case due to the restriction that there is a job released in $[s-7(t-s),t+(t-s))$. In the case that $J(s,t)=\emptyset$, there are no jobs. Thus we choose the maximal starting time possible, i.e. $b=s$.  For the remainder we can assume $J(s,t)\neq \emptyset$.
Let $\leftjobs$ as before be the set of jobs $j\in J(s,t)$ with $r_j\leq s-(t-s)$ and $\rightjobs=J(s,t)\setminus \leftjobs$. 
The cells $(s,t,B)$ for which $t-s=1$ are treated separately. 

First, we consider the case $t-s>1$. Assume that we are given such a cell $(s,t,B)$. We enumerate all possibilities for three budgets $B_{0},B_{1},B_{2}\geq0$,
each being an integral multiple of $\frac{\varepsilon}{n}\LB$ and
for which $B_{0}+B_{1}+B_{2}\leq B$ holds. Set $a=\frac{s+t}{2}$. We will use the deadlines
computed in the DP-cells $(s,a,B_{1})$ and $(a,t,B_{2})$. First,
we show that these cells indeed exist.
\begin{lemma}
	Consider a DP-cell $(s,t,B)$ for which $t-s>1$ and $J(s,t)\neq\emptyset$.
	Let $B_{1},B_{2}$ be integral multiples of $\frac{\varepsilon}{n}\LB$
	for which $B_{0}+B_{1}+B_{2}\leq B$ holds. Then $(s,a,B_{1})$ and
	$(a,t,B_{2})$ are DP-cells.
\end{lemma}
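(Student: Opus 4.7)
My plan is to verify the three defining conditions of a DP-cell for each of $(s,a,B_1)$ and $(a,t,B_2)$: that the interval lies in $\mathcal{I}$, that some job is released in the required enlarged window around it, and that the budget is a valid integer multiple of $\frac{\varepsilon}{n}\LB$ in the admissible range.

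The first condition is essentially by definition of $\mathcal{I}$. Since $[s,t)\in\mathcal{I}$ and $t-s>1$, the recursive definition of $\mathcal{I}$ immediately gives $[s,a)\in\mathcal{I}$ as the left child and $[a,t)\in\mathcal{I}$ as the right child, where $a=(s+t)/2$. The third condition is also straightforward: by assumption $B_1,B_2\ge 0$ are integer multiples of $\frac{\varepsilon}{n}\LB$, and $B_0+B_1+B_2\le B$ together with $B_0\ge 0$ forces $B_1,B_2\le B\le (2^p+\tfrac{4^p}{4^p-3^p})n^p\LB$, which is exactly the required upper bound.

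The only step that needs a small computation is the "released job" condition, namely that there exists a job released in $[s-7(a-s),a+(a-s))$ for $(s,a,B_1)$, and in $[a-7(t-a),t+(t-a))$ for $(a,t,B_2)$. Here I will use that $J(s,t)\neq\emptyset$: there is some job $j\in J$ with $b_0\le r_j<t$, where $b_0\ge s-3(t-s)$ by the definition of $b_0$ for any interval in $\mathcal{I}$. Since $a-s=t-a=(t-s)/2$, the left subcell's window becomes $[s-\tfrac{7}{2}(t-s),t)$ and the right subcell's becomes $[s-3(t-s),t+\tfrac{1}{2}(t-s))$. A direct comparison shows $s-3(t-s)\ge s-\tfrac{7}{2}(t-s)$, so $r_j\in[s-3(t-s),t)$ sits inside both of these windows. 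Thus the same witness job proves the condition for both subcells.

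There is no real obstacle here; this is a direct verification of the definitions and a short arithmetic check that $3(t-s)\le\tfrac{7}{2}(t-s)$. The only mild pitfall would be mixing up the $b_0$ value for the left versus right child of $[s,t)$ (which are $\max\{0,s-(t-s)\}$ and $\max\{0,s-(t-s)\}$ respectively, as one computes from $a-3(t-a)=s-(t-s)$), but since we only need \emph{existence} of the cell (not feasibility of a particular $b$), this plays no role: the witness job lies in the required window, the interval is in $\mathcal{I}$, and the budgets are admissible, so both $(s,a,B_1)$ and $(a,t,B_2)$ exist as DP-cells.
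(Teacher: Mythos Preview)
Your proof is correct and follows essentially the same approach as the paper: verify that $[s,a),[a,t)\in\mathcal I$, that the budgets lie in the admissible range, and use a job $j\in J(s,t)$ together with $b_0\ge s-3(t-s)$ to witness the released-job condition for both subcells via the inequality $s-3(t-s)\ge s-\tfrac{7}{2}(t-s)$. The arithmetic and the logic match the paper's argument almost verbatim, with your version spelled out in slightly more detail.
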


\begin{proof}
	Its is clear that $B_{1},B_{2}\leq(2^p+\frac{4^p}{4^p-3^p})n^p\LB$ and that $[s,a),[a,t)\in\mathcal{I}$.
	As $J(s,t)\neq\emptyset$ there must be a job release in $[b_{0},t)$.
	As $s-7(a-s)=s-3.5(t-s)<a-7(t-a)=s-3(t-s)\leq b_{0}$ and $a+(a-s)=t$,
	this job is released in $[s-7(a-s),s+(a-s))$ and $[a-7(t-a),t+(t-a))$.
	Thus the cells $(s,a,B_{1})$ and $(a,t,B_{2})$ exist for any budgets
	$B_{1},B_{2}$ which are multiples of $\frac{\varepsilon}{n}\LB$
	and for which $B_{0}+B_{1}+B_{2}\leq B$.
\end{proof}
If one of the DP-cells $(s,m,B_{1})$ or $(m,t,B_{2})$ is infeasible,
we simply continue with the next combination for $B_{0},B_{1},B_{2}$.
Otherwise, we obtain sets of deadlines $d^{(1)}$ and $d^{(2)}$ as
well as starting times $b^{(1)}$ and $b^{(2)}$ from the
computed solutions for $(s,a,B_{1})$ and $(a,t,B_{2})$, respectively.
Let  $\rightb:=\min\{b^{(1)},b^{(2)}\}$.
For each job $j\in \rightjobs $ we define $d_{j}$ as 
\[
d_{j}:=\begin{cases}
	d_{j}^{(2)} & \text{ if }d_{j}^{(2)}>a\\
	\min\{d_{j}^{(1)},a\} & \text{ otherwise.}
\end{cases}
\]
This yields deadlines corresponding to a schedule for $\rightjobs$ with
starting time $\rightb$ and cost at most $B_{1}+B_{2}$. 

Now consider a cell $(s,t,B)$ with $t-s=1$. No job $j \in \rightjobs$ can be completed before $t$ as $r_j+p_j\geq r_j+1>s-(t-s)+1=s$ implies $r_j+p_j\geq s+1=t$.
So there can only exists a solution if $B\geq\sum_{j\in \rightjobs}\cost_{j}(\infty)$. In this case let $B_{0}:=B-\sum_{j\in \rightjobs}\cost_{j}(\infty)$ be the remaining budget for jobs in $\leftjobs$. Furthermore let $\rightb:=s$ as there is no need to start with $\rightjobs$ before $s$. 

No matter whether $t-s=1$ or $t-s>1$, we need to define deadlines for the jobs in $\leftjobs$. Like before, we do this
such that $d_{j}\in\{s,\infty\}$ for each job $j\in \leftjobs$. Specifically, we use a modified version of the Lawler-Moore algorithm that runs in polynomial time to solve the following problem: we are given a set of $\widehat{n}$ Jobs $\widehat{J}$ where each $j \in \widehat{J}$ is characterized by a processing time $\widehat{p}_j\in \N$, a release date $\widehat{r}_j\in \N$ and a penalty cost $\widehat{c}_j$ where $\widehat{c}_j$ is a multiple of $\varepsilon/n\cdot LB$. In addition there is a common due date $\widehat{d}\in \N$ and a budget $\widehat{B}$, also a multiple of $\varepsilon/n\cdot LB$. The goal is  to compute a latest starting
time $\widehat{b}$  and a schedule for $\widehat{J}$ with deadlines $\widehat{d}_j$ where no job is scheduled before $\widehat{b}$ and the costs are bounded by $\widehat{B}$, i.e. $\sum_{j\in \widehat{J}: \widehat{d}_j>\widehat{d}}\widehat{c}_j\leq \widehat{B}$ or to decide that such a starting time does not exist.

\begin{lemma}\label{lem:modifiedLawler} There is an algorithm computing
	the optimal solution for the above problem in a running time that
	is polynomial in $\widehat{n}$ and $\widehat{B}/(\varepsilon/n\cdot\LB)$. 
\end{lemma}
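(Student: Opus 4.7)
The plan is to adapt the classical Lawler--Moore DP (cf.\ the proof of Lemma~\ref{lem:Lawler}) by indexing the DP table by the accumulated penalty cost rather than by the accumulated processing time, so that the table size becomes polynomial in $\widehat{B}/(\varepsilon/n\cdot\LB)$. As a first step, I would reuse the time-reversal reduction from the proof of Lemma~\ref{lem:Lawler}: set $\widehat{d}_j' := \widehat{d}-\widehat{r}_j$ and work with common release date $0$ and individual deadlines $\widehat{d}_j'$ in the reversed schedule. Under this bijection, an original schedule that processes the accepted jobs only in $[\widehat{b},\widehat{d}]$ corresponds to a transformed schedule with makespan exactly $\widehat{d}-\widehat{b}$, so maximising $\widehat{b}$ is equivalent to minimising the total processing time of the accepted set subject to feasibility in the transformed instance.

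After sorting so that $\widehat{d}_1'\le\widehat{d}_2'\le\cdots\le\widehat{d}_{\widehat{n}}'$, Lemma~\ref{lem:EDF} applied at common release time $0$ yields the standard characterisation that a subset $S\subseteq\widehat{J}$ is feasible iff, when $S$ is scheduled in this sorted order starting at $0$, every prefix sum $\sum_{j\in S,\, j\le k}\widehat{p}_j$ is at most $\widehat{d}_k'$. I would therefore define
\begin{equation*}
f[i,C] \;:=\; \min\Bigl\{\sum_{j\in S}\widehat{p}_j \,:\, S\subseteq\{1,\ldots,i\}\text{ feasible},\ \sum_{j\in\{1,\ldots,i\}\setminus S}\widehat{c}_j\le C\Bigr\}
\end{equation*}
for every $i\in\{0,\ldots,\widehat{n}\}$ and every $C$ that is an integer multiple of $\varepsilon/n\cdot\LB$ in $[0,\widehat{B}]$; there are only $O(\widehat{B}/(\varepsilon/n\cdot\LB))$ such values of $C$, and the fact that every $\widehat{c}_j$ lies on the same grid makes this discretisation closed under the transitions. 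The recurrence is the textbook accept/reject dichotomy
\begin{equation*}
f[i,C] \;=\; \min\bigl\{\, f[i-1,\,C-\widehat{c}_i],\ \ f[i-1,C]+\widehat{p}_i \,\bigr\},
\end{equation*}
where the first branch is allowed only when $C\ge\widehat{c}_i$ and the second only when $f[i-1,C]+\widehat{p}_i\le\widehat{d}_i'$ (the EDF-feasibility check for appending $i$ at the end); $f[i,C]:=+\infty$ if neither branch applies, and $f[0,C]:=0$. From the filled table I then read off $\widehat{b}:=\widehat{d}-f[\widehat{n},\widehat{B}]$, reporting infeasibility if $f[\widehat{n},\widehat{B}]=+\infty$, and reconstruct the schedule by back-tracing to recover the accepted set $S^\ast$ and running EDF on $S^\ast$ starting at $\widehat{b}$.

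Correctness is a routine exchange argument: among the optimisers of $f[i,C]$, either $i$ is rejected, yielding a feasible set on $\{1,\ldots,i-1\}$ with remaining budget $C-\widehat{c}_i$, or $i$ is accepted, in which case---because $i$ has the largest deadline in $\{1,\ldots,i\}$---the EDF feasibility of the enlarged set reduces exactly to $f[i-1,C]+\widehat{p}_i\le\widehat{d}_i'$. The running time is $O(\widehat{n}\cdot\widehat{B}/(\varepsilon/n\cdot\LB))$ since there are that many table entries and each is computed in constant time, which is polynomial in the required parameters. The only slightly delicate point I foresee is verifying that the EDF-feasibility condition collapses to the single scalar check $f[i-1,C]+\widehat{p}_i\le\widehat{d}_i'$; this is what the deadline-monotone ordering and the common release date of the transformed instance buy us.
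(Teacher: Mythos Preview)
Your proposal is correct and takes essentially the same approach as the paper: the paper's proof simply cites the weighted variant of the Lawler--Moore algorithm (with the same time-reversal trick from Lemma~\ref{lem:Lawler}), and you have spelled out precisely that variant, indexing the DP by accumulated penalty cost and optimizing processing time. Your correctness justification for the accept branch---that because $i$ has the largest transformed deadline, feasibility of $S'\cup\{i\}$ reduces to the single scalar check $f[i-1,C]+\widehat p_i\le \widehat d_i'$---is exactly the standard argument, so the ``slightly delicate point'' you flag is not an issue.
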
 
\begin{proof}
	This can be achieved by a variant of the Lawler-Moore algorithm. We refer
	to~\cite{lenstraelements}, Chapter~5.2 for a detailed description. Note that
	the reference uses integer weights, which can be achieved by scaling appropriately,
	and an inverted role of release dates and deadlines, as explained in the proof of Lemma~\ref{lem:Lawler}.
\end{proof}
We solve the problem above for the jobs $\widehat{J}=\leftjobs$ with release time $\widehat{r}_j=r_j$, processing times $\widehat{p}_j=p_j$ and costs $\widehat{c}_j=\cost_{j}(\infty)$. The due date is $\widehat{d}=\rightb$ and the budget is $\widehat{B}=B_0$.
In the case that $t-s=1$, the resulting schedule is the schedule for the DP-cell $(s,t,B)$ with $b=\widehat{b}$ and the DP-cell is declared infeasible if the modified Lawler-Moore algorithm does not find a schedule.

Now consider $t-s>1$. If the modified Lawler-Moore algorithm does not yield a schedule and a starting time $\widehat{b}$, we do not consider
the combination of budgets $B_{0},B_{1},B_{2}$ further. Otherwise we use the computed deadlines for $\leftjobs$ and define $b=\widehat{b}$ as the starting time for this problem. Finally
we choose the schedule with the maximal starting $b$ among all
combinations of $B_{0},B_{1},B_{2}$ for which we obtained a schedule.
If there is no such combination for $B_{0},B_{1},B_{2}$, then the
DP-cell $(s,t,B)$ is declared infeasible.

In the end, we determine the minimal budget $B$ for which the DP-cell $(0,T,B)$
is feasible. We consider the deadlines for the jobs according to $(0,T,B)$
and compute a schedule by running EDF with these deadlines. As before,
each job $j$ with $d_{j}=\infty$ is completed by time $T$.

\subsection{Polynomial time approximation factor}

In comparison to our pseudopolynomial time routine, we loose a factor
of $1+\varepsilon$ due to the rounding of the cost functions. The
proof of the next lemma follows the same structure as our argumentaiton
for the pseudopolynomial time algorithm and most of the inequalities
appear similarly there. 
\begin{lemma}\label{lem:polyApproxFactor}
	The algorithm in this section returns a $\big((2^p+\frac{4^p}{4^p-3^p})^{1/p}+\varepsilon\big)$-approximation. 
\end{lemma}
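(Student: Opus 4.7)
The plan is to mimic the proof of Lemma~\ref{lem:approxFactor}, porting the inductive analysis of Lemma~\ref{lem:approx} to (i) the $p$-power cost functions and (ii) the budget-indexed DP, while absorbing the floor-rounding of $\cost_j$ (lossy by at most $\tfrac{\varepsilon}{n}\LB$ per job). The core statement I would establish, by induction on $t-s$, is a $p$-norm analog of Lemma~\ref{lem:approx}: for every good DP-cell $(s,t,B)$---where ``good'' is the natural analog of Definition~\ref{def:goodDB} for the budget-DP---the cell is feasible whenever
\begin{equation*}
B \;\ge\; 2^p \hspace{-0.6em}\sum_{\substack{j\in J(s,t)\\ s<d_j^{\OPT}\le t}}\hspace{-0.6em} w_j(d_j^{\OPT}-r_j)^p \;+\; \tfrac{4^p}{4^p-3^p}\hspace{-0.6em}\sum_{j\in J(s,t)}\hspace{-0.6em} w_j\,\len\bigl([r_j,d_j^{\OPT}]\cap[s,t]\bigr)^p \;+\; |J(s,t)|\cdot\tfrac{\varepsilon}{n}\LB.
\end{equation*}
Applied at the root $(0,T,B)$ both sums collapse to $\OPT^p$ (since $\len([r_j,d_j^{\OPT}]\cap[0,T])=d_j^{\OPT}-r_j$) and the slack is at most $\varepsilon\LB\le\varepsilon\OPT^p$, so the minimum feasible $B$ at the root is at most $\bigl(2^p+\tfrac{4^p}{4^p-3^p}+\varepsilon\bigr)\OPT^p$; taking the $p$-th root gives the claimed ratio.

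The inductive step closely follows the pseudopolynomial proof. For a cell $(s,t,B)$ with $t-s>1$ I fix $\rightb$ as in Lemma~\ref{lem:recLawler}---the total time during $[0,s]$ that $\OPT$ does not process $\rightjobs$---so that $\rightb$ lies in the eligible range for both subcells, making $(s,a,B_1)$ and $(a,t,B_2)$ good cells. I then split $B=B_0+B_1+B_2+\text{slack}$, where $B_1,B_2$ match the inductive thresholds for the two subcells and $B_0$ covers the Lawler-Moore call. Lemma~\ref{lem:recLawler} combined with Lemma~\ref{lem:modifiedLawler} yields a feasible Lawler-Moore schedule whose total rejection cost is at most $\sum_{j\in\leftjobs,\,d_j^{\OPT}>s}w_j(t-r_j)^p$; the $p$-th power versions of the pointwise bounds $t-r_j\le 2(d_j^{\OPT}-r_j)$ (for $d_j^{\OPT}\in(s,t]$) and $t-r_j\le 4(t-s)$ (for $d_j^{\OPT}>t$, using the right-child definition of $b_0$) split this into the desired $2^p$ and $4^p$ pieces.

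Reassembling the bounds, the $2^p$-weighted ``in-interval'' sum decomposes exactly over $\leftjobs$ together with the two subcells via $A_{(s,t]} = A_{\leftjobs,(s,t]} + A_{(s,a]}^{\text{sub}} + A_{(a,t]}^{\text{sub}}$. For the $\len^p$ term the subcell intersection lengths $\ell_1,\ell_2$ of a $\rightjobs$ job satisfy $\ell_1+\ell_2=\len([r_j,d_j^{\OPT}]\cap[s,t])$; for $p\ge 1$ the sublinearity $\ell_1^p+\ell_2^p\le(\ell_1+\ell_2)^p$ lets the $\rightjobs$ part compose, and for $0<p<1$ the opposite inequality is compensated by invoking the tighter lookback bound $3(t-s)$ whenever the current cell is a left child. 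The parent-level $\leftjobs$ contribution $4^p(t-s)^p$ combines with the child-level $\tfrac{4^p}{4^p-3^p}$ coefficient via the geometric identity $\tfrac{4^p}{4^p-3^p}=\sum_{k\ge 0}(3/4)^{pk}$, whose ratio $(3/4)^p$ is precisely the ratio between the left- and right-child lookback constants; this is the fixed-point constant that makes the induction close. The per-cell rounding loss $|J(s,t)|\cdot\tfrac{\varepsilon}{n}\LB$ accumulates to at most $n\cdot\tfrac{\varepsilon}{n}\LB=\varepsilon\LB\le\varepsilon\OPT^p$ across all cells.

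The main obstacle is precisely this $\len^p$ bookkeeping: reconciling the halving of intervals, the two different lookback constants $3$ and $4$ for left- versus right-child cells, and the sub- or super-linearity of $x\mapsto x^p$, so that the closed-form constant $\tfrac{4^p}{4^p-3^p}$ emerges as the tight geometric sum. The regime $0<p<1$, where the length inequality runs the wrong way, is the delicate case and requires the finer left-child analysis; the base case $t-s=1$ is straightforward because then $\rightjobs$ jobs are forced to $d_j=\infty$ and contribute only the $\len^p=1$ term, while Lawler-Moore handles $\leftjobs$ with the same pointwise bounds as above.
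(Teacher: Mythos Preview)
Your inductive potential $\len([r_j,d_j^{\OPT}]\cap[s,t])^p$ with coefficient $\tfrac{4^p}{4^p-3^p}$ does not close the induction for $p>1$. Take a job $j\in\leftjobs$ with $d_j^{\OPT}>t$: it contributes $\cost_j(\infty)\le w_j(t-r_j)^p\le 4^p\,w_j(t-s)^p$ to $B_0$, while in your parent threshold the same job carries only $\tfrac{4^p}{4^p-3^p}\,w_j(t-s)^p$. Since $4^p>\tfrac{4^p}{4^p-3^p}$ for every $p>1$, the required inequality $B_0+B_1+B_2\le B$ fails right there. The geometric-series remark does not rescue this: with a \emph{fixed} coefficient in the invariant, the $\leftjobs$ cost has to be absorbed locally at the level where the job enters $\leftjobs$; it cannot be amortized against other levels, because a $\leftjobs$ job is not present in either subcell's job set.

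The paper avoids this by replacing $\len(\cdot)^p$ with the telescoping quantity
\[
\bigl(\hinten([r_j,d_j^{\OPT}]\cap[s,t])-r_j\bigr)^p-\bigl(\vorne([r_j,d_j^{\OPT}]\cap[s,t])-r_j\bigr)^p,
\]
where $\hinten$ and $\vorne$ are the right and left endpoints of the intersection. This quantity (i) telescopes \emph{exactly} under the split $[s,t]=[s,a]\cup[a,t]$ for every $p$, so no case distinction or separate $p<1$ argument is needed, and (ii) supports the pointwise bound $(t-r_j)^p\le\tfrac{4^p}{4^p-3^p}\bigl((t-r_j)^p-(s-r_j)^p\bigr)$, obtained from $r_j\ge s-3(t-s)$ via $4^p(s-r_j)^p\le 3^p(t-r_j)^p$. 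At the root it collapses to $(d_j^{\OPT}-r_j)^p$, giving exactly the ratio you aim for. A secondary issue: your additive slack $|J(s,t)|\cdot\tfrac{\varepsilon}{n}\LB$ does not compose across the recursion either, since $|J(s,a)|+|J(a,t)|$ can exceed $|J(s,t)|$; the paper instead wraps one floor around the entire good-cell threshold and uses $\lfloor x\rfloor+\lfloor y\rfloor+\lfloor z\rfloor\le\lfloor x+y+z\rfloor$.
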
 
\begin{proof}
	The following definition is inspired by Definition~\ref{def:goodDB}
	and Lemma~\ref{lem:approx}, adjusted to our new definition of DP-cells
	using a budget instead of a starting time. For an interval $[a,b]$ let $\vorne ([a,b]):=a$ and $\hinten ([a,b]):=b$ and $\vorne (\emptyset)=\hinten (\emptyset):=T$. Further, let $C:=\frac{4^p}{4^p-3^p}>0$. Let $\OPT$ be an optimal schedule with deadlines $d_{j}^{\OPT}$ and, slightly abusing notation, we also use to to refer to the cost $\sum_{j \in J}w_j(d_j^{\OPT}-r_j)^p$. We call a DP-cell $(s,t,B)$
	\emph{good} if 
	\begin{multline*}
	B\geq \Big\lfloor2^p\hspace{-1em}\sum_{\substack{j\in J(s,t)\\
			s<d_{j}^{\OPT}\le t
		}
	}\hspace{-1em}w_{j}(d_{j}^{\OPT}-r_{j})^p+C\sum_{j\in J(s,t)}w_{j}\cdot\Big(\bigl(\hinten([r_{j},d_{j}^{\OPT}]\cap[s,t])-r_j\bigr)^p \\[-2em]
- \big( \vorne([r_{j},d_{j}^{\OPT}]\cap[s,t])-r_j\big)^p\Big) \Big\rfloor_{\varepsilon/n\cdot\LB} \ .
	\end{multline*}
	Note that for $p=1$ this is exactly the bound from  Lemma~\ref{lem:approx}.
	We prove by induction on $t-s$ that each good DP-cell is feasible and
	its computed starting time is at least $b^{\OPT}$,
	the total amount of time during $[0,s]$ the schedule $\OPT$ does not process jobs from $J(s,t)$. To this end, let $(s,t,B)$ be a good DP-cell.
	
	First consider the base case: if
	$J(s,t)=\emptyset$, the algorithm chooses $b:=s$ and thus the cell is good for any budget. Now suppose that $J(s,t)\neq \emptyset$. 
	In the case $t-s>1$, we maximize the starting time over all budgets $B_{0},B_{1},B_{2}$,
	thus we can bound the starting time by choosing specific values
	for $B_{0},B_{1},B_{2}$. Hence, assume without loss of generality that 
	\begin{multline*}
		B_{0}  =\Big\lfloor2^p\hspace{-1em}\sum_{\substack{j\in \leftjobs\\
				s<d_{j}^{\OPT}\le t
			}
		}\hspace{-1em}w_{j}(d_{j}^{\OPT}-r_{j})^p+C\sum_{j\in \leftjobs}w_{j}\cdot \Big( \big(\hinten([r_{j},d_{j}^{\OPT}]\cap[s,t])-r_j\big)^p \\[-2em]
-\big(\vorne([r_{j},d_{j}^{\OPT}]\cap[s,t])-r_j\big)^p \Big) \Big\rfloor_{\varepsilon/n\cdot\LB} \ ,
	\end{multline*}
	\begin{multline*}
		B_{1}  =\Big\lfloor2^p\hspace{-1em}\sum_{\substack{j\in \rightjobs\\
				s<d_{j}^{\OPT}\le a
			}
		}\hspace{-1em}w_{j}(d_{j}^{\OPT}-r_{j})^p+C\sum_{j\in \rightjobs}w_{j}\cdot\Big( \big(\hinten([r_{j},d_{j}^{\OPT}]\cap[s,a])-r_j\big)^p \\[-2em]
-\big(\vorne([r_{j},d_{j}^{\OPT}]\cap[s,a])-r_j\big)^p \Big) \Big\rfloor_{\varepsilon/n\cdot\LB} \ ,
	\end{multline*}
	\begin{multline*}
		B_{2}  =\Big\lfloor2^p\hspace{-1em}\sum_{\substack{j\in \rightjobs\\
				a<d_{j}^{\OPT}\le t
			}
		}\hspace{-1em}w_{j}(d_{j}^{\OPT}-r_{j})^p+C\sum_{j\in \rightjobs}w_{j}\cdot \Big( \big(\hinten([r_{j},d_{j}^{\OPT}]\cap[a,t])-r_j\big)^p \\[-2em]
-\big(\vorne([r_{j},d_{j}^{\OPT}]\cap[a,t])-r_j\big)^p \Big) \Big\rfloor_{\varepsilon/n\cdot\LB} \ .
	\end{multline*}
	Note that $B_{0}+B_{1}+B_{2}\leq B$. Then the cells
	$(s,a,B_{1})$ and $(a,t,B_{2})$ exist and are good.
	Thus, by the induction hypothesis they are solved with the starting
	times $b^{(1)}$ and $b^{(2)}$, where $b^{(1)}$ is at least the
	total amount of time during $[0,s]$ the schedule $\OPT$ does not process jobs from $J(s,a)\subseteq\rightjobs$ (in particular, at least the time $\OPT$ does not process jobs from $\rightjobs$) and $b^{(2)}$ is at least the total amount
	of time during $[0,a]$ the schedule $\OPT$ does not process jobs from $J(a,t)=\rightjobs$.
	Let $\rightb=\min\{b^{(1)},b^{(2)}\}$.
	
	In the case that $t-s=1$ we define $B_0$ as above and $\rightb:=s$. Note that $B_0\leq B$.
	The remaining argument is the same no matter whether $t-s=1$ or $t-s>1$. In Lemma~\ref{lem:recLawler} was shown that there is a schedule using only the interval $[b^{\OPT},\rightb]$
	and completing each job $j\in \leftjobs$ with $d_{j}^{\OPT}\leq s$.
	Next we show that $B_{0}\geq\sum_{j\in \leftjobs:d_{j}^{\OPT}>s}\cost_{j}(\infty)$.
	
	Let $j \in \leftjobs$ with $d_j^{\OPT}>s$. First consider $d_j^{\OPT}\leq t$: As $r_j\leq s-(t-s)$ we obtain $t-r_j\leq (t-s)+(s-r_j)\leq 2(d_j^{\OPT}-r_j)$ and therefore $\cost_{j}(\infty)\leq w_j(t-r_j)^p\leq w_j\cdot (2(d_j^{OPT}-r_j))^p=2^pw_j(d_j^{\OPT}-r_j)^p$.
	
	Now let $d_j^{\OPT}> t$. Then $\hinten([r_{j},d_{j}^{\OPT}]\cap[s,t])=t$ and as $j \in \leftjobs$ also $\vorne([r_{j},d_{j}^{\OPT}]\cap[s,t])=s$. As $r_j\geq s-3(t-s)=4s-3t$ it follows $3t-3r_j\geq 4s-4r_j$ and thus $4^p(s-r_j)^p\leq 3^p(t-r_j)^p$. This yields \begin{align*}
		\cost_{j}(\infty)&\leq w_j(t-r_j)^p\\
		&\leq w_j\frac{1}{4^p-3^p}(4^p(t-r_j)^p-3^p(t-r_j)^p)\\
		&\leq w_j\frac{1}{4^p-3^p}(4^p(t-r_j)^p-4^p(s-r_j)^p)\\
		&=w_j \frac{4^p}{4^p-3^p}((t-r_j)^p-(s-r_j)^p)\\
		&=C\cdot w_{j}\cdot \Big( \big(\hinten([r_{j},d_{j}^{\OPT}]\cap[s,t])-r_j\big)^p-\big(\vorne([r_{j},d_{j}^{\OPT}]\cap[s,t])-r_j\big)^p \Big)
	\end{align*}
	Together with the case $d_j^{\OPT}\leq t$ this yields
	$$\sum_{j\in \leftjobs:d_{j}^{\OPT}>s}\cost_{j}(\infty)\leq B_0.$$
	This implies that the budget $B_{0}$ is sufficiently large to find that schedule. 
	Thus we get a schedule with a starting time of at least $b^{\OPT}$.
	This completes the proof that a good DP-cell has a starting time of
	at least~$b^{\OPT}$.
	
	Before proving the approximation ratio, we need an upper bound on
	$\OPT$ to show that there is a good cell for the interval $[0,T)$.
	The shortest remaining processing time rule can be applied to the
	weighted flow time problem (i.e. always working on the job that has
	currently the shortest remaining processing time). Consider a job
	$j$. Between the release time $r_{j}$ and the deadline $d_{j}$
	we only work on jobs with remaining processing time of at most $p_{j}$.
	As there are $n$ job, we can only work for a period of $n\cdot p_{j}$.
	Thus $d_{j}-r_{j}\leq n\cdot p_{j}$. We conclude that the resulting schedule
	has a cost of at most $\sum_{j\in J}w_j(d_j-r_j)^p\leq \sum_{j\in J}w_j\cdot n^p \cdot p_j^p \leq n^p\cdot \LB$.
	This implies that $\OPT\leq n^p\cdot\LB$.
	
	In the end the algorithm outputs the solution of the DP-cell $(0,T,B)$
	with minimal $B$ for which the cell is feasible. A DP-cell $(0,T,B)$
	is good if 
	\begin{align*}
		B & =\Big\lfloor2^p\hspace{-1em}\sum_{\substack{j\in J(0,T)\\
				0<d_{j}^{\OPT}\le T
			}
		}\hspace{-1em}w_{j}(d_{j}^{\OPT}-r_{j})^p+C\sum_{j\in J'}w_{j}\cdot\Big( \big(\hinten([r_{j},d_{j}^{\OPT}]\cap[0,T])-r_j\big)^p \\[-2em]
 & & \hspace{-2in} - \big(\vorne([r_{j},d_{j}^{\OPT}]\cap[0,T])-r_j\big)^p \Big) \Big\rfloor_{\varepsilon/n\cdot\LB}\\
		& =\Big\lfloor2^p\sum_{j\in J}w_{j}(d_{j}^{\OPT}-r_{j})^p+C\sum_{j\in J}w_{j}\cdot(d_{j}^{\OPT}-r_{j})^p\Big\rfloor_{\varepsilon/n\cdot\LB}\\
		& =\Big\lfloor(2^p+\frac{4^p}{4^p-3^p})\cdot\OPT\Big\rfloor_{\varepsilon/n\cdot\LB} \ .
	\end{align*}
	As $\lfloor (2^p+\frac{4^p}{4^p-3^p})\cdot\OPT\rfloor_{\varepsilon/n\cdot\LB}\leq (2^p+\frac{4^p}{4^p-3^p}) \cdot\OPT \leq (2^p+\frac{4^p}{4^p-3^p}) \cdot n^p \cdot \LB$
	this shows that there is a good DP-cell $(0,T,B)$ with $B\leq (2^p+\frac{4^p}{4^p-3^p}) \cdot\OPT$.
	Since all good DP-cells are solved, the algorithm outputs a schedule
	corresponding to a cell with budget at most $B\leq (2^p+\frac{4^p}{4^p-3^p}) \cdot\OPT$. The
	actual weighted flow time might be bigger than $B$ as we rounded
	down the cost in our DP. Thus the final bound on the weighted p-norm is
	\begin{align*}
		\big(\sum_{j\in J}w_{j}(d_{j}-r_{j})^p \big)^{1/p} & \leq \big(\sum_{j\in J}(\lfloor w_{j}(d_{j}-r_{j})^p\rfloor_{\varepsilon/n\cdot\LB}+\varepsilon/n\cdot\LB) \big)^{1/p}\\
		& \leq (B+\varepsilon \LB)^{1/p}\\
		& \leq \big((2^p+\frac{4^p}{4^p-3^p}+\varepsilon)\OPT\big)^{1/p}\\
		& \leq \big((2^p+\frac{4^p}{4^p-3^p})^{1/p}+\varepsilon\big)\cdot\big(\sum_{j\in J}w_{j}(d_{j}^{\OPT}-r_{j})^p \big)^{1/p}
	\end{align*}
	Thus, our the algorithm computes a $\big((2^p+\frac{4^p}{4^p-3^p})^{1/p}+\varepsilon\big)$-approximation for weighted $p$-norm.
\end{proof}

\subsection{Polynomial running time}

In this section we prove that with the changes made to the algorithm
we indeed have a polynomial running time. 

\begin{lemma}\label{lem:polyRunningTime} The number of DP-cells
	and the running time of our algorithm is bounded by $(n+\log T)^{O(1)}$.
\end{lemma}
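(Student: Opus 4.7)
The plan is to bound the running time by bounding three quantities separately: the number of intervals in $\I$ that actually admit a DP-cell, the number of budget values $B$ per interval, and the work performed when filling one DP-cell.

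For the intervals, I will use the restriction that $(s,t,B)$ is defined only if some job is released in $[s-7(t-s),t+(t-s))$. The intervals of $\I$ form a complete binary tree of depth $\log T$, and the intervals at level $\ell$ (counting from the leaves) all have length $2^\ell$. For each fixed job $j$ and each fixed level $\ell$, the constraint $r_j\in[s-7(t-s),t+(t-s))$ restricts the level-$\ell$ intervals to those whose extended range of length $9\cdot 2^\ell$ contains $r_j$; since consecutive level-$\ell$ intervals are spaced $2^\ell$ apart, only $O(1)$ such intervals exist at each level. Summing over the $O(\log T)$ levels and the $n$ jobs yields at most $O(n\log T)$ admissible intervals, replacing the naive $\Omega(T)$ bound.

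Next, the budget $B$ is an integer multiple of $\tfrac{\varepsilon}{n}\LB$ in $[0,(2^p+\tfrac{4^p}{4^p-3^p})n^p\LB]$, giving $O(n^{p+1}/\varepsilon)$ choices. For fixed $p$ and $\varepsilon$ this is polynomial in $n$, so the total number of DP-cells is $(n+\log T)^{O(1)}$. When filling a cell $(s,t,B)$ with $t-s>1$, the algorithm enumerates all triples $(B_0,B_1,B_2)$ of such multiples with $B_0+B_1+B_2\le B$, which is $O((n^{p+1}/\varepsilon)^3)$, and for each triple performs constant-many table lookups plus one call to the modified Lawler-Moore routine of Lemma~\ref{lem:modifiedLawler}, whose running time is polynomial in $n$ and $B_0/(\tfrac{\varepsilon}{n}\LB)=O(n^{p+1}/\varepsilon)$. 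The case $t-s=1$ needs only one such call. Combining the gives per-cell work bounded by $(n+\log T)^{O(1)}$.

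Multiplying the number of cells by the work per cell, and adding the final EDF step (polynomial in $n$), yields the claimed bound $(n+\log T)^{O(1)}$. The main point that needs care is the first step — verifying that the pruning criterion on released jobs really cuts the tree down from $\Theta(T)$ intervals to $O(n\log T)$ — since all subsequent bounds are routine. Once that geometric counting is in place, the rest of the proof is a straightforward product of polynomial factors.
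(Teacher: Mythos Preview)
Your proposal is correct and follows essentially the same approach as the paper: both bound the admissible intervals by $O(n\log T)$ via the release-time pruning (the paper makes the constant explicit as $9$), bound the budget values by $O(n^{p+1}/\varepsilon)$, bound the per-cell work by enumerating the $O((n^{p+1}/\varepsilon)^3)$ budget triples and invoking the modified Lawler--Moore routine, and then multiply these polynomial factors together.
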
 
\begin{proof}
	First, we prove that there are only a polynomial number of DP-cells.
	Consider a DP-cell $(s,t,B)$. We bound the number of different budgets.
	As $0\leq B\leq(2^p+\frac{4^p}{4^p-3^p})\cdot n^p\cdot \LB $ and $B$ is a multiple of $\varepsilon/n\cdot\LB$
	there are at most $(2^p+\frac{4^p}{4^p-3^p})\cdot n^{p+1}/\varepsilon$ many options for $B$.
	
	Next, we bound the number of different intervals $[s,t)\in\mathcal{I}$
	for which a corresponding DP-cell can exist (for some budget $B$).
	When the cell $(s,t,B)$ exists, there must be a job $j$ released
	during $[s-7(t-s),t+(t-s))$. As the intervals in $\mathcal{I}$ form
	a binary tree, if $[s,t)\in\mathcal{I}$ then $t-s=2^{k}$ for some
	integer $k$ for which $0\leq k\leq\log_{2}(T)$ holds. Furthermore,
	$s$ and $t$ are integer multiples of $2^{k}$. So for fixed $k$
	and $j$ there are at most $9$ intervals $(s,t)\in\mathcal{I}$ with
	$t-s=2^{k}$ and $r_{j}\in[s-7(t-s),t+(t-s))$. As there are $n$
	jobs and $\log_{2}(T)+1$ possible values for $k$, there are at most
	$9n(\log_{2}(T)+1)$ intervals $[s,t)\in\mathcal{I}$ for which a
	corresponding DP-cell can exist. 
	
	Now we bound the running time of our algorithm. As we have already
	bounded the number of DP-cells by $(n+\log T)^{O(1)}$, it remains
	to bound the needed time to process one DP-cell. Recall that we enumerate
	budgets $B_{0},B_{1},B_{2}$ that are all integral multiples of $\varepsilon/n\cdot\LB$
	and for which $B_{0}+B_{1}+B_{2}\leq B\leq (2^p+\frac{4^p}{4^p-3^p})\cdot n^p \cdot \LB $ holds. Thus, there
	are at most $O(n^{p+1}/\varepsilon)$ many options for each value $B_{0}$,
	$B_{1}$ and $B_{2}$, and hence at most $O(n^{3p+3}/\varepsilon^{3})$
	many combinations for these values. For each of these combinations
	we called the modified version of the Lawler-Moore algorithm (see
	Lemma~\ref{lem:modifiedLawler}) that also runs in polynomial time.
	This implies that our algorithm runs in time $(n+\log T)^{O(1)}$.
\end{proof}
\begin{theorem} Our algorithm is a polynomial time $\big((2^p+\frac{4^p}{4^p-3^p})^{1/p}+\varepsilon\big)$-approximation
	algorithm for the weighted $p$-norm of flow time problem on a single machine when
	preemptions are allowed. \end{theorem}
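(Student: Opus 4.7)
The plan is short because the theorem is essentially a wrap-up of the two previous lemmas. First I would observe that the algorithm described in this section is explicitly a dynamic program that, for every combination of an interval $[s,t)\in\mathcal{I}$ satisfying the release-time condition and every budget $B$ which is an integer multiple of $\varepsilon/n\cdot\LB$ in the allowed range, fills in a single DP-cell, and that at the end it returns the EDF-schedule associated with the deadlines stored in the feasible cell $(0,T,B)$ with smallest $B$. Feasibility of the EDF-schedule follows from exactly the same argument as in Lemma~\ref{lem:correctnessDP}: the construction of the deadlines in the DP inherits the recursive structure of the pseudopolynomial algorithm, and the modified Lawler-Moore subroutine of Lemma~\ref{lem:modifiedLawler} guarantees that jobs in $\leftjobs$ with deadline $s$ fit into $[b,\rightb)$.

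Second, I would invoke Lemma~\ref{lem:polyApproxFactor}, which establishes that the cost of the schedule returned by the DP is at most $\big((2^p+\tfrac{4^p}{4^p-3^p})^{1/p}+\varepsilon\big)\cdot\OPT$ when the objective is the weighted $p$-norm $\big(\sum_j w_j(d_j-r_j)^p\big)^{1/p}$. This takes care of the approximation guarantee claimed in the theorem.

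Third, I would invoke Lemma~\ref{lem:polyRunningTime}, which bounds the total running time by $(n+\log T)^{O(1)}$. Recalling that $T$ is chosen as the smallest power of two exceeding $\max_j r_j+\sum_j p_j$, which is itself polynomially bounded in the input size, $\log T$ is polynomial in the input size, so $(n+\log T)^{O(1)}$ is a polynomial bound. Combining this with the approximation guarantee yields exactly the statement of the theorem.

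I do not expect any real obstacle here; the genuine technical work lives in Lemma~\ref{lem:polyApproxFactor} (rounding the cost to discretized values while still being able to locate a good DP-cell that dominates $\OPT$) and Lemma~\ref{lem:polyRunningTime} (bounding the number of intervals in $\mathcal{I}$ by $O(n\log T)$ via the binary-tree structure and the requirement that some job is released near $[s,t)$). The theorem itself is a one-line consequence of these two lemmas.
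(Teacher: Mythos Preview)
Your proposal is correct and follows exactly the paper's approach: the theorem is a direct combination of Lemma~\ref{lem:polyApproxFactor} and Lemma~\ref{lem:polyRunningTime}, and the paper's own proof consists of precisely those two invocations. One small slip: you write that $T$ ``is itself polynomially bounded in the input size,'' which is not true in the bit model; what you need (and what you correctly conclude) is only that $\log T$ is polynomial in the input size, since $T\le 2(\max_j r_j+\sum_j p_j)$ and the latter has bit-length at most the input size.
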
 
\begin{proof}
	As shown in Lemma~\ref{lem:polyApproxFactor}, the algorithm is a
	$\big((2^p+\frac{4^p}{4^p-3^p})^{1/p}+\varepsilon\big)$-approximation. And as shown in Lemma~\ref{lem:polyRunningTime}
	the algorithm runs in polynomial time. 
\end{proof}

\section{Omitted proofs in Section~\ref{sec:QPTAS}}

\subsection{Proof of Lemma~\ref{lem:chooseDelta}}
First, we show that we can delay the completion of each job by $\varepsilon$
at a cost of at most $1+\varepsilon$. Let $j$ be a job. Then $C_{j}-r_{j}\ge p_{j}\ge1$.
Thus, $(C_{j}-r_{j}+\varepsilon)^{p}\le(1+\varepsilon)^{p}(C_{j}-r_{j})^{p}$.
Since we take the $p$-th root of the sum over all weighted flow times,
the exponent of $(1+\epsilon)$ disappears.

For a tupel $(j_{1},\dots,j_{m})\in(J\cup\{\bot\})^{m}$ of $m$ jobs
(or placeholder $\bot$) let $\ell(j_{1},\dots,j_{m})$ denote the
total amount of time in $\OPT$ where simultaneously for each $i$
the job $j_{i}$ is executed on machine $i$. The symbol $\bot$ is
used for idle time. For intuition, we now imagine a single machine
scheduling problem with a ``tuple-job'' of length $\ell(j_{k_{1}},\dots,j_{k_{m}})$
for every $m$-tuple of jobs, which is released at $\max\{r_{j_{k_{1}}},\dotsc,r_{j_{k_{m}}}\}$
and whose deadline is $\min\{C_{j_{k_{1}}},\dotsc,C_{j_{k_{m}}}\}$,
where $C_{j}$ are the completion times in $\OPT$. The optimal schedule
asserts that there exists a feasible schedule for this instance and,
thus, we may also assume that it is obtained from the Earliest-deadline-first
(EDF) algorithm w.r.t. the tuple-jobs. If we now increase the processing
time of each job to the next integral multiple of~$\delta$, then
EDF produces a schedule where all jobs are started, preempted, and
resumed only at integral multiples of $\delta$, which also translates
to a discrete schedule of the original instance. Note that we may
process some jobs longer than necessary. Also, note that a job might
not finish by its original deadline. However, the accumulated delay
of some tuple-job is at most the number of tuple-jobs that have a
smaller deadline, multiplied by $\delta$ (since we rounded up the
processing time of each tuple-job by at most $\delta$). Hence, the
total delay of each job $j$ is bounded by at most $n^{m}\cdot\delta\le\varepsilon$.

\subsection{Proof of Lemma~\ref{lem:deadlineProperties}}
In the construction, we first define non-discrete
deadlines, which we later round to integral multiples of $\delta$.
We iterate over all jobs $j=1,2,\dotsc,n$ in this order, in particular,
in non-decreasing order of release times. Assume we already constructed
the deadlines for all jobs $\{1,...,j-1\}$. If $j$ is the first
job, we initialize $D(j)=\{r_{j},r_{j}+1,T\}$, otherwise we set $D(j)=\{r_{j},r_{j}+1\}\cup(D(j-1)\cap[r_{j},\infty))$,
that is, we take the deadlines of the previous job restricted to the
relevant range for $j$ and add $r_{j}$ and $r_{j}+1$. We now search
for a violation of the third property, that is, we search for two
consecutive deadlines $d,d'\in D(j)$ with $ r_{j}+1\leq d<d'$ and $(d'-r_{j})/(d-r_{j})>1+\varepsilon$.
If such deadlines exist, we introduce a new one by taking a geometric
average. More precisely, we insert deadline $d''$ with $d''=r_{j}+\sqrt{(d-r_{j})\cdot(d'-r_{j})}$.
We repeat until exhaustion. After creating the non-discrete deadlines
for all jobs, we replace for each job each potential deadline $d$
by two discrete deadlines $\lfloor d \rfloor_\delta$ and $\lceil d \rceil_\delta$, i.e.,
the largest integral multiple of
$\delta$ that is at most $d$ and the smallest integral multiple
that is at least~$d$.

We will first consider the non-discrete deadlines and argue that the
procedure terminates and for any job $j$ two consecutive deadlines
$d'>d> r_{j}+1$ of $j$ satisfy
$(d'-r_{j})/(d-r_{j})\ge\sqrt{1+\varepsilon}$. Consider the first
job $j=1$. The potential deadlines are initialized as $\{r_{1},r_{1}+1,T\}$.
At this time, the property is trivially satisfied, since there is
no such pair of deadlines. Whenever the procedure adds a new deadlines
$d''$ between $d$ and $d'$, then $d'-r_{1}>(1+\varepsilon)(d-r_{1})$
and $d''-r_{1}=\sqrt{(d'-r_{1})\cdot(d-r_{1})}$. Thus,
\[
\frac{d'-r_{1}}{d''-r_{1}}=\sqrt{\frac{d'-r_{1}}{d-r_{1}}}\ge\sqrt{1+\varepsilon}\ .
\]
Similarly,
\[
\frac{d''-r_{1}}{d-r_{1}}=\sqrt{\frac{d'-r_{1}}{d-r_{1}}}\ge\sqrt{1+\varepsilon}\ .
\]
Hence, the lower bound is maintained and at all times the difference
of the deadline to $r_{j}$ grows by a factor of $\sqrt{1+\varepsilon}$
with every deadline (after the first deadline that is bigger than $r_j+1$). Thus, the number
of deadlines that are at least $r_j+1$ is always bounded by $O(\log_{\sqrt{1+\varepsilon}}(T))=O(\log(T)/\varepsilon)$.

In particular, the procedure must terminate. Now consider some $j>1$.
We only need to show that the initialization of $D(j)$ satisfies
again the lower bound, then we can apply the same argument. To this
end, assume inductively it holds for $D(j-1)$ and recall that $D(j)$
is initialized as $\{r_j,r_{j}+1\}\cup(D(j-1)\cap[r_{j},\infty)$. Thus,
any pair of consecutive deadline $d'>d>r_{j}+1$ also appears in the
initial $D(j-1)$. It follows that that
\[
\frac{d'-r_{j}}{d-r_{j}}\ge\frac{d'-r_{j-1}}{d-r_{j-1}}\ge\sqrt{1+\varepsilon}\ .
\]
Thus each job has at most $O(\log(T)/\varepsilon)$ deadlines that are at least $r_{j}+1$, i.e. there is a constant $C>0$ with $|D(j)\cap [r_j+1, \infty)|\leq C\cdot\log(T)/\varepsilon)$ for each job $j$. 
We prove by induction by $j$ that each job has at most $2C\cdot\log(T)/\varepsilon$ deadlines in total. 
There is no deadline for the first job between $r_1$ and $r_1+1$ by construction, so the induction hypothesis holds. So let $j>1$. If $r_j=r_{j-1}$ then $D(j)=D(j-1)$ as we do not create new deadlines for $j$. 
Thus by the induction hypothesis $|D(j)|=|D(j-1)|\leq 2C\cdot\log(T)/\varepsilon$.	
If $r_j>r_{j-1}$ then $r_{j}\geq r_{j-1}+1$ and for $j$ the recursive algorithm only introduces new deadlines that are at least $r_j+1$. So $D(j)\cap (r_j, r_{j}+1)\subseteq D(j-1)\cap (r_{j-1}+1, \infty)$ and $|D(j)\cap (r_j, r_{j}+1)|\leq |D(j-1)\cap (r_{j-1}+1, \infty)|\leq C\cdot\log(T)/\varepsilon)-1 $. 
This yields $|D(j)|=|\{r_j\}|+|D(j)\cap (r_j, r_{j}+1)|+|D(j)\cap [r_j, \infty)|\leq 2C\cdot\log(T)/\varepsilon$. Thus each job has at most $O(\log(T)/\varepsilon)$ deadlines.

Let us now prove the statement of the lemma. Since the number of discrete
deadlines is at most twice the number of non-discrete deadlines, the
first property is satisfied. The second property and fourth properties
follow immediately from construction. For the third property, consider
two consecutive discrete deadlines of job $j$ that are at least $r_j+1$. If both deadlines
were derived from the same non-discrete deadline, i.e., they are $\lfloor d\rfloor_{\delta}$
and $\lceil d\rceil_{\delta}$ for some non-discrete deadline $d$,
then $\lceil d\rceil_{\delta}=\lfloor d\rfloor_{\delta}+\delta$ and
thus $(\lceil d\rceil_{\delta}-r_{j})/(\lfloor d\rfloor_{\delta}-r_{j})\le(1+\delta)/1\le1+\varepsilon$.
Otherwise, the two discrete deadlines must be $\lfloor d'\rfloor_{\delta}$
and $\lceil d\rceil_{\delta}$ for two consecutive non-discrete deadlines
$d'>d$. In that case,
\[
\frac{\lfloor d'\rfloor_{\delta}-r_{j}}{\lceil d\rceil_{\delta}-r_{j}}\le\frac{d'-r_{j}}{d-r_{j}}\le1+\varepsilon.
\]
The running time of the procedure is polynomial in the number of deadlines
created and the number of jobs, thus polynomial in $n$ and $\log(T)$.

\subsection{Proof of Lemma~\ref{lem:dp}}
We argue inductively that for each job $j$ and each load vector $L$,
the dynamic program stores in the DP-cell $(j,L)$ a schedule with
minimum cost among all schedule in which for each interval $I\in\mathcal{I}(j)$
and each set of machines $S\subseteq M$, there are exactly $L(I,S)$
many $\delta$-intervals in $I$ during which the jobs $\{j,j+1,\dotsc,n\}$
are scheduled on exactly the set of machines $S$---assuming there
exists such a schedule.

First assume that $j=n$ and consider a load vector $L$. Let $\OPT(j,L)$
be the optimal solution of the subproblem corresponding to the DP-cell
$(j,L)$. For each interval $I\in\I(j)$ and each set $S\subseteq M$
we define a value $y(I,S,i)$ as follows. We define $y(I,S,i):=0$
if $S\neq\{i\}$ and otherwise we define $y(I,S,i)$ as the number
of $\delta$-intervals in $I$ during which $\OPT(j,L)$ processes
job $j$ on machine $i$, i.e., $y(I,\{i\},i):=L(I,\{i\})$. Then,
Inequality~\eqref{eq:dp-complete} must be satisfied since otherwise
$j$ would not be completed in $\OPT(j,L)$. Further, the residual
load vector $L'$ as defined in~\eqref{eq:dp-residual} must be the
all-zero vector. Hence, if we guess $y(I,S,i)$ for each combination
of an interval $I\in\I(j)$, each set $S\subseteq M$, and each $i\in S$,
we obtain a feasible solution. Therefore, we store a solution in the
DP-cell $(j,L)$ and this solution does not have a higher cost than
$\OPT(j,L)$.

Now consider a job $j<n$ and a load vector $L$. Let again $\OPT(j,L)$
be the optimal solution of the subproblem corresponding to the DP-cell
$(j,L)$. For each interval $I\in\I(j)$, each set $S\subseteq M$,
and each $i\in S$ we define a value $y(I,S,i)$ to be the number
of $\delta$-intervals in $I$ during which $\OPT(j,L)$ processes
job $j$ on machine $i$ and some job from the set $\{j+1,j+2,\dotsc,n\}$
on each machine in $S\setminus\{i\}$. Then, Inequality~\eqref{eq:dp-complete}
is satisfied, since again $j$ would not be completed otherwise. For
each interval $I\in\I(j)$ and each set $S\subseteq M$, let $L'(I,S)$
be the number of $\delta$-intervals in $I$ during which $\OPT(j,L)$
processes on each machine in $S$ some job in $\{j+1,j+2,\dotsc,n\}$.
Then $L'$ must satisfy~\eqref{eq:dp-residual}. We define a load
vector $L''$ similarly as $L'$ but with respect to the intervals
in $\mathcal{I}(j+1)$ instead of $\mathcal{I}(j)$. Formally, for
each interval $I\in\I(j+1)$ and each set $S\subseteq M$ let $L''(I,S)$
be the number of $\delta$-intervals in $I$ during which $\OPT(j,L)$
processes on each machine in $S$ some job in $\{j+1,j+2,\dotsc,n\}$.
Then, $L''$ and $L'$ satisfy Inequality~\eqref{eq:dp-subdivision} and $L''$ is compatible with $L'$.
Thus, if we guess the values $y(I,S,i)$, we obtain a feasible solution.
Hence, we store a solution in the DP-cell $(j,L)$ whose cost is not
larger than the cost of $\OPT(j,L)$.

It is straight-forward that in each DP-cell $(j,L)$, we store a schedule
for the jobs $\{j,...,n\}$ that respects the load constraints given
by $L$. Hence, there is a load vector $L^{*}$ for which the cell
$(1,L^{*})$ stores a schedule for all jobs $\{1,...,n\}$ with cost
at most $\OPT$. Thus, we output a solution with cost at most $\OPT$.

\subsection{Proof of Lemma~\ref{lem:running-time}}
The running time of our DP is dominated by the number of DP-cells
and guesses. We have that $|\mathcal{I}(j)|\le O(\log(T)/\varepsilon)$
for each job $j$. Thus, each load vector $L$ of a DP-cell $(j,L)$
has at most $O(2^{m}\log(T)/\varepsilon)$ entries. Since for every
entry there are at most $T/\delta$ possible values, we have at most
$(T/\delta)^{O(2^{m}\log(T)/\varepsilon)}$ possible load vectors
$L$ in total for each job $j$.

When we process a DP-cell $(j,L)$, we guess a values $y(I,S,i)$
for each combination of an interval $I\in\mathcal{I}(j)$, a set $S\subseteq M$,
and a machine $i\in S$. Thus, the number of these values is bounded
by $O(2^{m}m\log(T)/\varepsilon)$ and for each value $y(I,S,i)$
there are at most $T/\delta$ many options. Hence, the total number
of guesses for all values $y(I,S,i)$ bounded by $(T/\delta)^{O(2^{m}m\log(T)/\varepsilon)}$.
Thus, the running time is bounded by $O(n\log n)$ for sorting the
jobs initially and by $n\cdot(T/\delta)^{O(2^{m}m\log(T)/\varepsilon)}$
for the dynamic program.

\subsection{Proof of Theorem~\ref{thm:QPTAS}}
By Lemma~\ref{lem:chooseDelta}, there is a $(1+\varepsilon)$-approximate
discrete schedule with $\delta=\varepsilon/n^{m}$. Let $C_{j}^{*}$
be the corresponding completion times and $d_{j}^{*}\in D(j)$ minimal
with $C_{j}^{*}\le d_{j}^{*}$. Then $d_{j}^{*}-r_{j}\le(1+\varepsilon)(C_{j}^{*}-r_{j})$.
Hence,
\[
(\sum_{j}w_{j}(d_{j}^{*}-r_{j})^{p})^{1/p}\le(1+\varepsilon)(\sum_{j}w_{j}(C_{j}^{*}-r_{j})^{p})^{1/p}\le(1+\varepsilon)^{2}\OPT.
\]
by Lemma~\ref{lem:dp} the dynamic program computes a schedule of
at most this cost. Rescaling $\varepsilon$ by a constant improves
this to $1+\varepsilon$. For the running time note that by standard
arguments we can bound $r_{j}$ by $n\cdot p_{\max}$ for all $j$:
otherwise the instance can be split easily into two independent parts.
This means that $T$ is polynomially bounded in $n$ and $p_{\max}$
and the running time follows from Lemma~\ref{lem:running-time}.

\end{document}